\titleformat*{\section}{\Large \normalfont \bfseries}
\titleformat*{\subsection}{\large \normalfont \bfseries}
\newcommand{\nc}{\newcommand}
\nc{\ny}{\nonumber}
\nc{\ra}{\rangle}
\nc{\la}{\langle}
\nc{\D}{\Delta}
\nc{\lmb}{\lambda}
\nc{\sg}{\sigma}
\nc{\ora}{\overrightarrow}
\nc{\td}{\widetilde}
\nc{\NSR}{\textsf{NSR}}
\nc{\sfF}{\textsf{F}}
\nc{\Vir}{\textsf{Vir}}
\nc{\Asl}{\widehat{\mathfrak{sl}}}
\nc{\sfV}{\textsf{V}}
\nc{\tr}{\mathrm{Tr}}
\nc{\red}{\color{red}}
\nc{\NS}{\scriptscriptstyle{\textsf{NS}}}
\nc{\R}{\scriptscriptstyle{\textsf{R}}}
\nc{\nsr}{\scriptscriptstyle{\textsf{NSR}}}
\nc{\vac}{\varnothing}
\nc{\G}{\mathsf{G}}
\nc{\BNS}{\textsf{NS}}
\nc{\BR}{\textsf{R}}
\nc{\Q}{\mathbf{Q}}
\nc{\B}{\operatorname{B}}
\nc{\ti}{\mathfrak{t}}
\numberwithin{equation}{section}
\newtheorem{prop}{Proposition}[section]
\newtheorem{lemma}{Lemma}[section]
\newtheorem{Remark}{Remark}[section]
\title{Blowup relations on $\mathbb{C}^2/\mathbb{Z}_2$ from Nakajima-Yoshioka blowup relations}
\author{Anton Shchechkin}
\date{}
\begin{document}
	\maketitle
	
	\begin{abstract}
		We obtain bilinear relations on Nekrasov partition functions,
		arising from study of tau functions of quantum $q$-Painlev\'e equations, from Nakajima-Yoshioka blowup relations by an elementary
		algebraic approach.
		
		Additionaly, using this approach, we prove certain relations on Nekrasov partition functions modified by Chern-Simons term.
		
	\end{abstract}
	
	\tableofcontents
	
	\newpage
	
	\section{Introduction}
	
	\paragraph{Background and main results.}
	This paper is motivated by studies of so-called Painlev\'e/gauge theory (or Isomonodromy/CFT)\footnote{Reason for such two names for one
	correspondence is that Painlev\'e equations arise from particular cases of isomonodromic problems on Riemann surfaces with punctures,
	and that instanton partition functions of supersymmetric gauge theories equal to certain CFT conformal blocks according to the AGT relation
	\cite{AGT09}.}
	correspondence, starting with the work \cite{GIL12}, where Painlev\'e VI tau function was written as a Fourier series
	of $SU(2)$ Nekrasov partition
	function of instantons on $\mathbb{C}^2$ with four matters and $\epsilon_1+\epsilon_2=0$
	\begin{equation}
	\tau(\sg,s|z)=\sum_{n\in\mathbb{Z}} s^n\mathcal{Z}(a+2n\epsilon_2;-\epsilon_2,\epsilon_2|z). \label{Mastertau} 
	\end{equation}
	Then plenty generalizations of this formula appeared, in particular for the tau functions of Painlev\'e V, III's equations \cite{GIL13},
	for the tau functions of $q$-Painlev\'e equations \cite{BS16q}, \cite{JNS17},\cite{MN18} as well as for isomonodromic problems,
	more sophisticated than those corresponding to Painlev\'e equations (\cite{G15}, \cite{ILT14} etc.)
	The main idea of this generalization (and of Painlev\'e/gauge theory correspondence) is that for each tau function on the Painlev\'e
	side we should relate certain instanton partition function on gauge side, such that tau function will be given by the Fourier series
	\eqref{Mastertau}.
	Particularly, for the Painlev\'e III($D_8^{(1)}$) equation it is pure gauge $SU(2)$ Nekrasov instanton partition function on $\mathbb{C}^2$,
	for $q$-Painlev\'e equations one should take 5d instanton partion functions, adding one compact dimension of radius $R=-\log q$, for
	isomonodromic problems of rank $N$ we should take $SU(N)$ gauge group etc.
	
	It turns out that Painlev\'e (differential and $q$-difference) equations and, presumably, more sophisticated isomonodromic problems
	are written as bilinear equations on these tau functions. According to \eqref{Mastertau}, such equations are equivalent to certain bilinear
	relations on Nekrasov partition functions, which have form
	\begin{equation}
	\sum_{n \in \mathbb{Z}+j/2} 
	\mathrm{D}\Big(\mathcal{Z}(a-2n\epsilon_2;-2\epsilon_2, 2\epsilon_2|z),
	\mathcal{Z}(a+2n\epsilon_2;-2\epsilon_2, 2\epsilon_2|z)\Big)=0,\quad j=0,1, \label{biltermsintro}
	\end{equation}
	where $\mathrm{D}$ is certain differential or $q$-difference operator. One of the approaches to the proof of Painlev\'e/gauge theory correspondence
	in particular cases is to find such relations on appropriate partition functions. For differential Painlev\'e equations that was done
	from the CFT side of AGT relation, using representation theory of Super Virasoro algebra (\cite{BS14},\cite{BS16b}).
	On the gauge theory side bilinear relations on Nekrasov partition functions appear from Nakajima-Yoshioka blowup relations (proved in \cite{NY05})
	\begin{equation}
	\beta^{d}_j(q_1,q_2|z)\mathcal{Z}(u;q_1,q_2|z)=\sum_{n \in \mathbb{Z}+j/2} 
	\Big(\mathcal{Z}(uq_1^{2n};q_1,q_2q_1^{-1}|q_1^{d}z)
	\mathcal{Z}(uq_2^{2n};q_1q_2^{-1},q_2|q_2^{d}z)\Big),\, q_i=e^{R\epsilon_i}\label{Z=ZZintro},
	\end{equation}
	namely, by excluding partition function in the l.h.s. from two such relations.
	However, in r.h.s. of these relations $\Omega$-background parameters differ from that in \eqref{bilconfrelintro}.
	
	Appropriate relations from the gauge theory side of AGT possibly could be obtained from the blowup relations on $\mathbb{C}^2/\mathbb{Z}_2$
	(possibly modified by 5th compact dimension). Namely,  in
	\cite{BMT11} (see also \cite{BPSS13}) 4d blowup formula was proved
	\begin{equation}
	\mathcal{Z}_{X_2}(a,\epsilon_1,\epsilon_2|z)=\sum_{n \in \mathbb{Z}} \Big(\mathcal{Z}(a+2n\epsilon_1;2\epsilon_1,-\epsilon_1+\epsilon_2|z),
	\mathcal{Z}(a+2n\epsilon_2;\epsilon_1-\epsilon_2,2\epsilon_2|z)\Big),\label{X2blowup}
	\end{equation}
	where $X_2$ is minimal resolution of $\mathbb{C}^2/\mathbb{Z}_2$.
	However, 5d modification of these blowup relations seem to be missing in the literature.
	
	First relation of such type, considered in the study of the $q$-Painlev\'e equations, is 
	\begin{equation}
	\begin{aligned}
	\sum_{n\in\mathbb{Z}}\mathcal{Z}(uq_1^{2n};q_1^2,q_2q_1^{-1}|q_1^{2}z)\mathcal{Z}(uq_1^{2n};q_1q_2^{-1},q_2^2|q_2^{2}z)
	=(1-(q_1q_2)^{1/2}z^{1/2})\sum_{n\in\mathbb{Z}}\mathcal{Z}(uq_1^{2n};q_1^2,q_2q_1^{-1}|z)\mathcal{Z}(uq_1^{2n};q_1q_2^{-1},q_2^2|z).\label{bilconfrelintro}
	\end{aligned}
	\end{equation}
	It was proposed in \cite{BS16q} (see (B.5) in loc. cit.)
	and proved in \cite{BS18} for Painlev\'e equations case $q_1q_2=1$.
	It was proved in an elementary way, using Nakajima-Yoshioka blowup relations, but it seems that used approach
	cannot be generalized for arbitrary $q_1, q_2$.
	This and analogous relations are important for study of the quantum Painlev\'e equations, namely they appear
	in Conjecture 4.2 from \cite{BGM17} and also as in \cite{BGM18} for Nekrasov partition function modified by Chern-Simons term.
	In this paper we find elementary way to obtain such relations for arbitrary $q_1, q_2$ from Nakajima-Yoshioka blowup relations.
	
	Namely, results are as follows
	\begin{itemize}
		\item We proved relations from Conjecture 4.2 from \cite{BGM17}.
		These relations are above mentioned \eqref{bilconfrelintro}, and \eqref{bilconfrel424}, \eqref{bilconfrel423}, \eqref{bilconfrel422}
		from the main text.
		\item We proved relation \eqref{bilconfrel1} on level $1$ Chern-Simons-modified Nekrasov partition functions.
		\item Using our approach, we prove certain relations on Chern-Simons modified Nekrasov partition functions,
		namely
		\begin{equation}
		\mathcal{Z}^{[2]}_{inst}(u;q_1,q_2|z)=(z;q_1,q_2)_{\infty}\mathcal{Z}^{[0]}_{inst}(u;q_1,q_2|z),\label{Z2=Z0intro}
		\end{equation}
		and \eqref{qinv1} in the main text. Here the number in the square brackets indicates the Chern-Simons level.
	\end{itemize}
	All these relations are relations on 5d $SU(2)$ pure gauge Nekrasov partition functions and we proved them for arbitrary $q_1,q_2$\footnote{
		Strictly speaking, proof is done only for the case $\epsilon_1/\epsilon_2\in\mathbb{Q}_{\leq0}$,
		because only in this case we can guarantee convergence of appropriate Nekrasov partition functions,
		see Subsection \ref{ssec:inst} for details. There is no such problem in 4d case.}.
	As we mentioned above, \eqref{bilconfrel} as well as \eqref{bilconfrel1}, were proved in \cite{BS18} for $q_1q_2=1$.
	Relation \eqref{Z2=Z0intro} was proved in \cite{BS18} for $q_1q_2=1$, $q_1q_2^2=1$, $q_1^2q_2=1$
	and relation \eqref{qinv1} was proved in \cite[Prop. 1.38]{GNY06} for $q_1^2q_2=1$. 
	It is easy to take standard limit to bilinear relations on 4d Nekrasov partition functions, we do not discuss this. 
	
	\paragraph{Method.}
	The method is based on the fact, that blowing up $\mathbb{C}^2$
	twice in a way, represented on Fig.~\ref{fig:blowup} (where filed circle is point of an actual blowup) we get certain $-2$-curve
	(thick line on the scheme).~\footnote{We are grateful to Hiraku Nakajima, who suggested to use this observation to study of bilinear
		relations on Nekrasov partition functions, arising from Painlev\'e equations.}.
	Note that presence or absence of additional 5th compact dimension does not affect on blowup geometry.
	
	As mentioned above, we are interested in bilinear relations, which could be obtained from 
	blowup of $\mathbb{C}^2/\mathbb{Z}_2$. Its exceptional divisor is $-2$ curve in contrast to
	$\mathbb{C}^2$ blowup, where it is $-1$ curve.
	In terms of $\epsilon_1,\epsilon_2$ this is represented by the value $I=\epsilon_1^{(1)}/\epsilon_2^{(1)}+\epsilon_2^{(2)}/\epsilon_1^{(2)}$,
	where $\epsilon_{1,2}^{(\eta)}, \eta=1,2$ are $\Omega$-background parameters for the first and the second partition functions in blowup equation
	respectively. For Nakajima-Yoshioka blowup relation \eqref{Z=ZZintro} this equals to $-1$, and for $X_2$ blowup relation \eqref{X2blowup}
	it equals to $-2$. Therefore we will call bilinear relations on Nekrasov partition functions with $I=-2$ as $-2$ bilinear relations.
	
	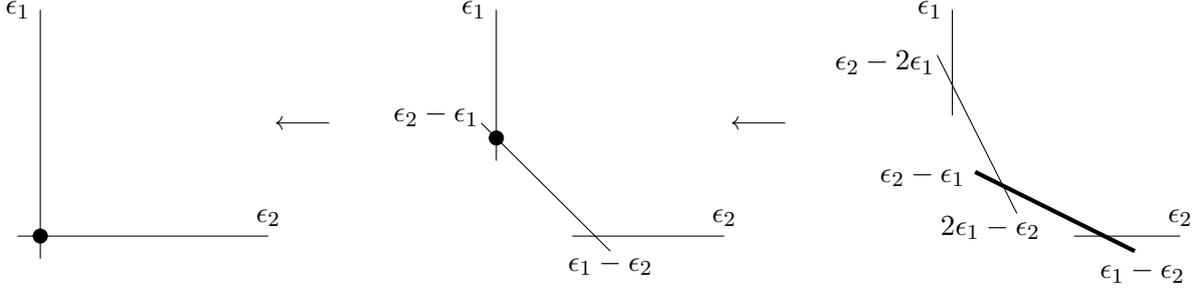
\begin{figure}[h]
		\begin{center}
			\begin{tikzpicture}
			\draw (1.7,1) -- (5,1);
			\draw (2,0.7) -- (2,4);
			\draw (2,4) node [anchor=east] {$\epsilon_1$};
			\draw (5,1) node [anchor=south] {$\epsilon_2$};
			\fill (2,1) circle (0.1cm);
			
			\draw [<-] (5.1,2.5) -- (5.8,2.5);

			\draw (9,1) -- (11,1);
			\draw (8,2) -- (8,4);
			\draw (7.8,2.5) -- (9.5,0.8);
			\draw (8,4) node [anchor=east] {$\epsilon_1$};
			\draw (11,1) node [anchor=south] {$\epsilon_2$};
			\draw (7.9,2.6) node [anchor=east] {$\epsilon_2-\epsilon_1$};
			\draw (9.5,0.9) node [anchor=north] {$\epsilon_1-\epsilon_2$};
			
			\fill (8,2.3) circle (0.1cm);
			
			\draw[<-] (11.1,2.5) -- (11.8,2.5);
			
			\draw (15.6,1) -- (17,1);
			\draw[ultra thick] (16.4,0.8) -- (14.3,1.85);
			\draw (14,2.6) -- (14,4);
			\draw (13.8,3.4) -- (14.85,1.3);
			
			\draw (14,4) node [anchor=east] {$\epsilon_1$};
			\draw (17,1) node [anchor=south] {$\epsilon_2$};
			\draw (13.9, 3.3) node [anchor=east] {$\epsilon_2-2\epsilon_1$};
			\draw (14.3, 1.8) node [anchor=east] {$\epsilon_2-\epsilon_1$};
			\draw (14.5, 1.4) node [anchor=north] {$2\epsilon_1-\epsilon_2$};
			\draw (16.5, 0.8) node [anchor=north] {$\epsilon_1-\epsilon_2$};
			
			\end{tikzpicture}
			\caption{\leftskip=1in \rightskip=1in Blowup scheme of the approach}
			\label{fig:blowup}
		\end{center}
	\end{figure}
	
	Using this observation, we try to represent it at the level of instanton partition functions.
	Practically our approach breaks down into several steps:
	\begin{enumerate}
		\item We take conjectured $-2$ bilinear relation $\widehat{\mathcal{Z}}(u;q_1,q_2)=0$
		and convolute it with another Nekrasov partition function (here we omit dependence on $z$ for simplicity)
		\begin{equation}
		\sum_{m\in\mathbb{Z}}\mathcal{Z}(u(q_1q_2)^{2m};q_1q_2,q_1^{-2}) \widehat{\mathcal{Z}}(uq_2^{2m};q_1,q_2)
		\label{convolintro}.
		\end{equation}
		
		\item Substituting $\widehat{\mathcal{Z}}(q_1,q_2)=\sum_{terms} \sum_{n\in\mathbb{Z}} \mathcal{Z}(q_1^2,q_2 q_1^{-1}) 
		\mathcal{Z}(q_1 q_2^{-1},q_2^2) $ and using Nakajima-Yoshioka blowup relations \eqref{Z=ZZintro} twice, we obtain linear combination of
		Nekrasov partition functions
		\begin{multline}
		\!\!\!\!\!\!\!\!\!\!\!\!\!\!\!\!\!\!\sum_{terms} \sum_{m,n\in\mathbb{Z}}\mathcal{Z}(q_1q_2,q_1^{-2}) \mathcal{Z}(q_1^2,q_2 q_1^{-1}) \mathcal{Z}(q_1 q_2^{-1},q_2^2)=
		\sum_{terms} \sum_{m\in\mathbb{Z}}\mathcal{Z}(q_1q_2,q_2 q_1^{-1}) \mathcal{Z}(q_1 q_2^{-1},q_2^2)
		=\sum_{terms}  \mathcal{Z}(q_1 q_2,q_2^2),
		\end{multline}
		where we also omitted dependence on $u$, which is shifted appropriately. First sum represents several terms in the initial bilinear relation.
		\item Suppose that this linear combination is zero. Finally we prove, that initial $-2$ bilinear
		relation was also zero. It is done step by step, proving that all $z$-powers of the initial relations equal zero.
	\end{enumerate}
	
	Note that connection between $-1$ and $-2$ blowups is used in the theory of Donaldson invariants \cite{FS94}, \cite{B94}.

	\paragraph{Content.}
	In Section \ref{sec:Nekrasov} we recall necessary facts about Nekrasov partition functions and Nakajima-Yoshioka
	blowup relations. There we also prove convergence of 5d pure gauge $SU(2)$ Nekrasov partition function for 
	$\epsilon_1/\epsilon_2\in\mathbb{Q}_{<0}$.
	
	Section \ref{sec:-2} describes our approach to obtain $-2$ bilinear relations on Nekrasov partition functions.
	We start from general scheme in Subsection \ref{ssec:gs}, then we obtain $-2$ bilinear relations
	in even (Subsection \ref{ssec:even}) and odd sector (Subsection \ref{ssec:odd}). We finish by 
	generalization of our approach for Chern-Simons-modified Nekrasov partition functions.
	
	In Section \ref{sec:appl} we obtain higher order Nakajima-Yoshioka blowup relations as a byproduct of our approach
	and prove certain symmetry relations on Chern-Simons-modified Nekrasov partition functions.
	
	We conclude with several directions for further study.

	\paragraph{Acknowledgements.}
	We thank Hiraku Nakajima for telling the idea, that inspired writing this paper, and pointing
	us to references \cite{FS94}, \cite{B94}, Mikhail Bershtein for interest to our work and
	stimulating discussions, Roman Gonin for discussion of Prop. \ref{prop:suff}.
	
	We are grateful to Pavlo Gavrylenko and Mykola Semenyakin for a careful reading
	of the Introduction.
	
	This work is partially supported by HSE University Basic Research Program and funded
        (partially) by the Russian Academic Excellence Project ’5-100’.
	Classification of $-2$ bilinear relations was supported by the Russian Science Foundation
	(project 16-11-10316).
	
	\section{Nekrasov functions and Nakajima-Yoshioka blowup relations}
	\label{sec:Nekrasov}
	\subsection{5d Nekrasov partition function}
	
	We start from reviewing Nekrasov partition functions $\mathcal{Z}$ of pure SUSY $SU(2)$ gauge theory on $\mathbb{C}^2$ extended 
	by the 5th compact dimension and discuss their components. 
	
	Full Nekrasov partition function $\mathcal{Z}$ splits into three factors (we follow conventions of \cite{NY03L}, \cite{NY05})
	\begin{equation}
	\mathcal{Z}=\mathcal{Z}_{cl}\mathcal{Z}_{1-loop}\mathcal{Z}_{inst}.\label{Zstr}
	\end{equation}
	In loc. cit. $\mathcal{Z}_{cl}$ and $\mathcal{Z}_{1-loop}$ appear from the so-called "perturbative" part. 
	Nekrasov function depends on parameters of the $\Omega$-background $\epsilon_1, \epsilon_2$, vacuum expectation values $a_1, a_2$
	with condition $a_1+a_2=0$ (we denote $a=a_1-a_2$) and also on the radius $R$ of the 5th compact dimension.
	In 5d case it is convenient to use multiplicative parameters, connected with above by
	\begin{equation}
	u_i=e^{Ra_i}, \quad q_i=e^{R\epsilon_i},\quad i=1,2
	\end{equation}
	with condition $u_1u_2=1$
	(we denote $u=u_1/u_2$). 
	To obtain pure SUSY $SU(2)$ gauge theory on $\mathbb{C}^2$ one should tend $R\rightarrow0$, we will not discuss such limit in this paper. 
	Sometimes we also want to modify our pure SUSY $SU(2)$ gauge theory by additional Chern-Simons theory of level $l\in\mathbb{Z}$.
	We denote such instanton partition function (and related objects) by superscript $[l]$. 
	
	In this paper we will consider only region $\epsilon_1<0<\epsilon_2$, which corresponds to central charge $c\leq1$ under the AGT correspondence,
	due to several reasons. This region contains cases $\epsilon_1+\epsilon_2=0$ and $2\epsilon_1+\epsilon_2=0$ which are interesting in context of applications to Painlev\'e
	equations, as we explained in the Introduction. This region is also closed under Nakajima-Yoshioka blowup relations (see \eqref{eq:Z=ZZ} below),
	which we discuss at the end of this Section.
	
	\paragraph{Instanton part of Nekrasov partition functions.}
	\label{ssec:inst}
	Instanton part of 5d Nekrasov function, modified by Chern-Simons theory of level $l$ is given by Nekrasov formula
	\begin{equation}
	\mathcal{Z}^{[l]}_{inst}(u;q_1,q_2|z)=\sum_{\lmb^{(1)},\lmb^{(2)}}\frac{\prod_{i=1}^2 (q_1q_2)^{-\frac{l}2 |\lmb^{(i)}|}
		\mathsf{T}^l_{\lmb^{(i)}}(u_i;q_1,q_2)}
	{\prod_{i,j=1}^2\mathsf{N}_{\lmb^{(i)},\lmb^{(j)}}(u_i/u_j;q_1,q_2)}(q_1^{-1}q_2^{-1}z)^{|\lmb^{(1)}|+|\lmb^{(2)}|},
	\label{Zinst5}
	\end{equation}
	written in terms of combinatorial block
	\begin{equation}
	\mathsf{N}_{\lmb,\mu}(u;q_1,q_2)=\prod_{s\in\mathbb{\lmb}}\left(1-uq_2^{-a_{\mu}(s)-1} q_1^{l_{\lmb}(s)}\right)
	\prod_{s\in\mathbb{\mu}}\left(1-uq_2^{a_{\lmb}(s)} q_1^{-l_{\mu}(s)-1}\right) \label{N5}
	\end{equation}
	and Chern-Simons term 
	\begin{equation}
	\mathsf{T}_{\lmb}(u;q_1,q_2)=\prod_{(i,j)\in\lmb} u^{-1}q_1^{1-i}q_2^{1-j}. \label{CSterm}
	\end{equation}
	Here $\lmb^{(1)}, \lmb^{(2)}$ are partitions, $|\lmb|=\sum\lmb_j$ and $a_{\lmb}(s), l_{\lmb}(s)$ denote lengths of arms and legs
	for the box $s$ in the Young diagram corresponding to the partition $\lmb$. 
	
	The function $\mathcal{Z}^{[l]}_{inst}(u;q_1,q_2|z)$ satisfies elementary symmetry properties:
	\begin{equation}
	\mathcal{Z}^{[l]}_{inst}(u;q_1,q_2|z)=\mathcal{Z}^{[l]}_{inst}(u;q_2,q_1|z)=\mathcal{Z}^{[l]}_{inst}(u^{-1};q_1,q_2|z). \label{qZsymm} 
	\end{equation}
	For $l=0$ there is also elementary symmetry 
	\begin{equation}
	\mathcal{Z}_{inst}(u;q_1,q_2|z)=\mathcal{Z}_{inst}(u;q_1^{-1},q_2^{-1}|z). \label{qinv}
	\end{equation}
	Its proof is based on term by term coincidence of the power series, that's why we have immediately 
	\begin{equation}
	\mathcal{Z}^{[-l]}_{inst}(u;q_1,q_2|z)=\mathcal{Z}^{[l]}_{inst}(u;q_1^{-1},q_2^{-1}|z) \label{CSlinv}
	\end{equation}
	In the case $q_1q_2=1$ the symmetry $q_1, q_2\mapsto q^{-1}_1, q^{-1}_2$
	is equivalent to the symmetry $q_1\leftrightarrow q_2$ for arbitrary $l$.
	
	For general $q_1, q_2$, the situation with $q_1, q_2\mapsto q^{-1}_1, q^{-1}_2$ symmetry is more subtle.
	For $l\neq0$ term by term comparison does not work.
	For $l=1$, however, one has
	\begin{equation}
	\mathcal{Z}^{[1]}_{inst}(u;q_1,q_2|z)=\mathcal{Z}^{[1]}_{inst}(u;q_1^{-1},q_2^{-1}|z).\label{qinv1}
	\end{equation}
	The proof for the case $q_1=q^{-1},q_2=q^2$ case is given in \cite[Prop. 1.38]{GNY06}. 
	We will prove this equality in Subsection \ref{ssec:bruteforce} for arbitrary $q_1,q_2$.
	
	For $l=2$ it turns out that
	\begin{equation}
	\mathcal{Z}^{[2]}_{inst}(u;q_1,q_2|z)=(z;q_1,q_2)_{\infty}\mathcal{Z}^{[0]}_{inst}(u;q_1,q_2|z), \label{Z2=Z0}
	\end{equation}
	so that 
	\begin{equation}
	\mathcal{Z}^{[2]}_{inst}(u;q_1^{-1},q_2^{-1}|z)=\frac{1-z}{(z;q_1)_{\infty}(z;q_2)_{\infty}}\mathcal{Z}^{[2]}_{inst}(u;q_1,q_2|za). 
	\end{equation}
	We will discuss and prove equality \eqref{Z2=Z0} in Subsection \ref{ssec:bruteforce}.
	
	\paragraph{Convergence of Nekrasov partition functions.}
	Let us consider the convergence of the series \eqref{Zinst5}. We proved that
	\begin{prop}\label{prop:convF}
		Let $q_1=q^{-m}, q_2=q^{n}$, $m,n\in\mathbb{Z}_{\geq1}$,  
		$|q|\neq 1$ and $u \neq q^k$, $k \in \mathbb{Z}$. Then series \eqref{Zinst5} for $l=0$ 
		converges uniformly and absolutely on every bounded subset
		of $\mathbb{C}$.
	\end{prop}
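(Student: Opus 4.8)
The idea is to bound the coefficient of $z^{N}$ in \eqref{Zinst5} (with $l=0$) and show it decays faster than any geometric sequence in $N$, so that the series has infinite radius of convergence; uniform absolute convergence on disks then follows from the Weierstrass $M$-test. Since \eqref{qinv} holds term by term, the series for $(q_{1},q_{2})=(q^{-m},q^{n})$ equals the one obtained by replacing $q$ with $q^{-1}$, and the hypotheses are symmetric under $q\mapsto q^{-1}$; so I may assume $|q|>1$.

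Every factor occurring in a block $\mathsf{N}_{\lmb,\mu}$ of \eqref{N5} has the form $1-c\,q^{k}$ with $c\in\{1,u,u^{-1}\}$ and $k\in\mathbb{Z}$. With $q_{1}=q^{-m}$, $q_{2}=q^{n}$ the factors $1-c\,q_{2}^{-a_{\mu}(s)-1}q_{1}^{l_{\lmb}(s)}$ have $q$-exponent $\le-n<0$, while the factors $1-c\,q_{2}^{a_{\lmb}(s)}q_{1}^{-l_{\mu}(s)-1}$ have $q$-exponent $na_{\lmb}(s)+m(l_{\mu}(s)+1)\ge m>0$. As $|q|\neq1$ and $u\neq q^{k}$ for every $k$, none of these factors vanishes; moreover, for $|q|>1$ only finitely many of the occurring numbers $|1-c\,q^{k}|$ are "small", so there are constants $\delta>0$, $c_{0}>0$ with $|1-c\,q^{k}|\ge\delta$ for every factor and $|1-c\,q^{e}|\ge c_{0}|q|^{e}$ for every factor of the second type (where $e\ge m$). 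Hence $|\mathsf{N}_{\lmb,\mu}|\ge\delta^{|\lmb|}c_{0}^{|\mu|}|q|^{\sum_{s\in\mu}(na_{\lmb}(s)+m\,l_{\mu}(s)+m)}$, and multiplying the four blocks $\mathsf{N}_{\lmb^{(i)},\lmb^{(j)}}$ of \eqref{Zinst5}, keeping only the two diagonal ones ($i=j$, $c=1$) in the exponent of $|q|$, and using $na+ml+m\ge\min(m,n)(a+l+1)$,
\[
\Bigl|\prod_{i,j=1}^{2}\mathsf{N}_{\lmb^{(i)},\lmb^{(j)}}\Bigr|\;\ge\;A^{\,N}\,|q|^{\,\min(m,n)\sum_{i=1}^{2}\sum_{s\in\lmb^{(i)}}h_{\lmb^{(i)}}(s)},\qquad N=|\lmb^{(1)}|+|\lmb^{(2)}|,
\]
for a fixed $A>0$, where $h=a+l+1$ is the hook length.

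The one combinatorial ingredient is the classical estimate $\sum_{s\in\mu}h_{\mu}(s)\ge c_{1}|\mu|^{3/2}$: by the hook length formula $\prod_{s}h_{\mu}(s)=|\mu|!/f^{\mu}$ (with $f^{\mu}$ the number of standard Young tableaux of shape $\mu$) and $(f^{\mu})^{2}\le\sum_{\nu}(f^{\nu})^{2}=|\mu|!$ (sum over partitions of $|\mu|$) one gets $\prod_{s}h_{\mu}(s)\ge\sqrt{|\mu|!}$, hence by AM--GM $\sum_{s}h_{\mu}(s)\ge|\mu|\bigl(\prod_{s}h_{\mu}(s)\bigr)^{1/|\mu|}\ge|\mu|(|\mu|/e)^{1/2}$. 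Combined with convexity of $t\mapsto t^{3/2}$, i.e. $|\lmb^{(1)}|^{3/2}+|\lmb^{(2)}|^{3/2}\ge2^{-1/2}N^{3/2}$, the previous display becomes $|\prod_{i,j}\mathsf{N}_{\lmb^{(i)},\lmb^{(j)}}|\ge A^{N}|q|^{c_{2}N^{3/2}}$ with $c_{2}>0$. The number of pairs $(\lmb^{(1)},\lmb^{(2)})$ with $|\lmb^{(1)}|+|\lmb^{(2)}|=N$ is at most $e^{c_{3}\sqrt{N}}$ and $|q_{1}^{-1}q_{2}^{-1}|^{N}\le C_{0}^{N}$ for a fixed $C_{0}$, so the sum of the moduli of all $z^{N}$-terms of \eqref{Zinst5} is at most $e^{c_{3}\sqrt{N}}(C_{0}/A)^{N}|z|^{N}|q|^{-c_{2}N^{3/2}}$; on $\{|z|\le R\}$ this is dominated by $\exp(c_{3}\sqrt{N}+N\log(RC_{0}/A)-c_{2}N^{3/2}\log|q|)$, and $\sum_{N}$ of this converges since $\log|q|>0$. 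The $M$-test then gives uniform and absolute convergence on $\{|z|\le R\}$, hence on every bounded subset of $\mathbb{C}$.

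The proof involves no deep step; the crux is the bookkeeping — separating the factors $1-c\,q^{k}$ that merely stay bounded away from $0$ from those growing like $|q|^{e}$, and noticing that the latter contribute (through the diagonal blocks) the factor $|q|$ raised to a constant multiple of $\sum_{s}h_{\mu}(s)$, which by the $|\mu|^{3/2}$ bound is exactly what turns a finite radius of convergence into an entire function. The hypothesis $u\neq q^{k}$ is used only to secure the uniform gap $\delta>0$, and $l=0$ is essential: for $l\neq0$ the Chern--Simons prefactor $(q_{1}q_{2})^{-\frac{l}{2}|\lmb|}\mathsf{T}^{l}_{\lmb}$ contributes to the numerator a power of $q$ whose exponent is quadratic in $|\lmb|$ for a one-row or one-column $\lmb$ (according to the sign of $l$), which for $|q|>1$ ruins the estimate.
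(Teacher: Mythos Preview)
Your argument is essentially correct and the overall strategy is sound, but one step is imprecise. The claimed sign bounds on the $q$-exponents --- first-type exponent $\le -n$ and second-type exponent $\ge m$ --- hold only for the diagonal blocks $\mathsf N_{\lambda^{(i)},\lambda^{(i)}}$, where $a_{\lambda^{(i)}}(s),\,l_{\lambda^{(i)}}(s)\ge 0$. For the off-diagonal blocks $a_{\lambda^{(j)}}(s)$ may be negative (take $\lambda^{(1)}=\varnothing$, $\lambda^{(2)}=(K)$: then $E_{12}=\sum_{s\in\lambda^{(2)}}(na_{\lambda^{(1)}}(s)+m(l_{\lambda^{(2)}}(s)+1))\sim -nK^{2}/2$), so if you apply your displayed bound for $|\mathsf N_{\lambda,\mu}|$ to all four blocks and then ``keep only the diagonal ones in the exponent of $|q|$'' by dropping $E_{12}+E_{21}$, you are discarding a factor $|q|^{E_{12}+E_{21}}$ that can be much smaller than $1$, which would invalidate the lower bound. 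The repair is immediate: for the two off-diagonal blocks use only the uniform gap $|\,\text{factor}\,|\ge\delta$, and reserve the growth estimate $|1-q^{e}|\ge c_{0}|q|^{e}$ for the second-type factors in the \emph{diagonal} blocks, where indeed $e\ge m$. This yields exactly your displayed inequality with $A=\delta^{3}c_{0}$, and the rest of your proof goes through unchanged.

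Apart from this point your route differs genuinely from the paper's. The paper rewrites each factor as $|q|^{k/2}|q^{k/2}-q^{-k/2}|$ (and the $u$-shifted analogue), uses a linear lower bound $|[k]_{q}|\ge L_{1}^{1/2}|k|$ on $q$-integers so that the two diagonal factors attached to a box contribute $\ge\text{const}\cdot h_{\lambda}(s)^{2}$, and then invokes $\prod_{s}h_{\lambda}(s)=|\lambda|!/\dim\lambda$ together with $\sum_{|\lambda|=n}(\dim\lambda)^{2}=n!$ to get a clean $e^{C|z|}$ majorant. You instead extract exponential growth $|q|^{\min(m,n)\sum h}$ from the diagonal blocks and convert $\prod h\ge\sqrt{|\mu|!}$ via AM--GM into the additive bound $\sum h\ge e^{-1/2}|\mu|^{3/2}$, then control the number of partition pairs by the Hardy--Ramanujan estimate. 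Your approach gives a stronger (super-exponential) decay of the coefficients and avoids the representation-theoretic identity $\sum(\dim\lambda)^{2}=n!$; the paper's pairing of factors has the advantage that the off-diagonal $|q|$-powers telescope to a constant per box, so no care is needed there.
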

	The proof of this Proposition is similar to the one in \cite[Prop 1 (i)]{ItsLTy14}
	and generalize Proposition \cite[Prop. 3.1.]{BS16q} to the case ${\epsilon_1}/{\epsilon_2}\in\mathbb{Q}_{<0}$.
	\begin{proof}
		There exist constants $L_1,L_2 \in \mathbb{R}_{>0}$, such that
		\begin{equation}\left|\frac{q^{k/2}-q^{-k/2}}{q^{1/2}-q^{-1/2}}\right|>|k| L_1^{1/2},\;\; \forall k \in \mathbb{Z}_{\neq 0},\qquad 
		\left|\frac{u^{1/2}q^{k/2}-u^{-1/2}q^{-k/2}}{q^{1/2}-q^{-1/2}}\right|>L_2^{1/2}, \;\; \forall k \in \mathbb{Z}.\end{equation}
		Then we can bound
		$\prod_{i,j=1}^2 N_{\lambda_i,\lambda_j}(u_i/u_j;q^{-m},q^n)$ as 
		\begin{multline}
		\Bigl| N_{\lambda_1,\lambda_1}(1;q^{-m},q^n) N_{\lambda_2,\lambda_2}(1;q^{-m},q^n) \Bigr|=\\
		\prod_{s \in \lambda_1}|q|^{\frac{m-n}2}|q^{\frac12 (n(a_{\lambda_1}(s)+1)+ml_{\lambda_1}(s))}-q^{-\frac12 (n(a_{\lambda_1}(s)+1)+ml_{\lambda_1}(s))}|
		|q^{\frac12 (na_{\lambda_1}(s)+m(l_{\lambda_1}(s)+1))}-q^{-\frac12 (na_{\lambda_1}(s)+m(l_{\lambda_1}(s)+1))}|\\
		\cdot \prod_{s \in\lambda_2}(\lambda_1 \leftrightarrow \lambda_2) >
		\left(\prod_{s \in \lambda_1} |h_{\lambda_1}|^2\right)\left(\prod_{s \in \lambda_2} |h_{\lambda_2}|^2\right) 
		\left|q^{\frac{m-n}2}\operatorname{min}^2(m,n)L_1(q^{1/2}-q^{-1/2})^2\right|^{|\lambda_1|+|\lambda_2|}>\\
		> \frac{|\lambda_1|!^2|\lambda_2|!^2}{(\dim\lambda_1\dim\lambda_2)^2}\left|q^{\frac{m-n}2}\operatorname{min}^2(m,n)L_1(q^{1/2}-q^{-1/2})^2\right|^{|\lambda_1|+|\lambda_2|},
		\end{multline}
		\begin{multline}
		\Bigl|N_{\lambda_1,\lambda_2}(u;q^{-m},q^n) N_{\lambda_2,\lambda_1}(u^{-1};q^{-m},q^n)\Bigr| =\\=
		\prod_{s \in \lambda_2}|q|^{\frac{m-n}2}|u^{\frac12}q^{\frac12(n(a_{\lambda_2}(s)+1)+m\l_{\lambda_1}(s))}-u^{-\frac{1}2}q^{-\frac12(n(a_{\lambda_2}(s)+1)+ml_{\lambda_1}(s))}|\\
		\cdot|u^{\frac12}q^{\frac12(na_{\lambda_2}(s)+m(l_{\lambda_1}(s)+1))}-u^{-\frac{1}2}q^{-\frac12(na_{\lambda_2}(s)+m(l_{\lambda_1}(s)+1))}|\\ 
		\cdot \prod_{s \in\lambda_1}(\lambda_1 \leftrightarrow \lambda_2, u\leftrightarrow u^{-1}) >  
		\left|q^{\frac{m-n}2}L_2(q^{1/2}-q^{-1/2})^2\right|^{|\lambda_1|+|\lambda_2|},
		\end{multline}
		where we used hook length formula for $\dim \lambda$.  
		Since $\sum_{|\lambda|=n} (\dim\lambda)^2=n!$ we have $\mathcal{Z}_{inst}(u;q^{-1},q|z)<\exp\left|\dfrac{2z}{q^{m-n}\operatorname{min}^2(m,n)L_1 L_2(q^{1/2}-q^{-1/2})^4}\right|$.
	\end{proof}
	For $\epsilon_1/\epsilon_2\notin\mathbb{Q}$ arguments of above proof do not work.
	In this case poles in $u$ of sum \eqref{Zinst5} are dense and it seems that the series diverges.
	However, we do not have any proof.
	For $l\neq0$ we also don't know any proof, however numerical experiments suggest that it diverges 
	when $|l|>2$. According to symmetry \eqref{CSlinv} below we will restrict ourselves to the levels $l=0,1,2$.
	Such restriction is also natural from the cluster point of view \cite{BGM18}.
	We will consider case $l=2$ only to discuss relation \eqref{Z2=Z0}.
	
	Note also that results on convergence of Nekrasov instanton partition functions in other sector
	(namely $\epsilon_1,\epsilon_2>0$ and also complex conjugated non-imaginary $\epsilon_1,\epsilon_2$) 
	were obtained in paper\cite{FML17}.
	
	\paragraph{Classical and 1-loop part of Nekrasov partition functions.}
	Classical and 1-loop parts of 5d Nekrasov function are given by
	\begin{eqnarray}
	\mathcal{Z}_{cl}(u;q_1,q_2|z)&=(q_1^{-1}q_2^{-1}z)^{-\frac{\log^2 u}{4\log q_1\log q_2}},\label{Zcl5}\\
	\mathcal{Z}_{1-loop}(u;q_1,q_2)&=(u;q_1,q_2)_{\infty}(u^{-1};q_1,q_2)_{\infty},\label{Z1loop5}
	\end{eqnarray}
	where $q$-Pochhammer symbol defined by 
	\begin{equation}
	(z;q_1,\ldots q_N)_{\infty}=\prod_{i_1,\ldots i_N=0}^{\infty}\left(1-z\prod_{k=1}^Nq_k^{i_k}\right)
	\label{Pochhammer_def}
	\end{equation}
	satisfy $q$-shift relations
	\begin{equation}
	(z;q_1,\ldots q_N)_{\infty}/(zq_1;q_1,\ldots q_N)_{\infty}=(z;q_2,\ldots q_N)_{\infty},
	\quad (z;q)_{\infty}/(zq;q)_{\infty}=1-z. \label{qshift} 
	\end{equation}
	These parts do not depend on $l$ (however, $\mathcal{Z}_{cl}$ becomes depend on $l$ in case of $SU(r)$, $r>2$ gauge group).
	One can see that symmetries $q_1\leftrightarrow q_2$, $u\mapsto u^{-1}$ are also satisfied by classical and 1-loop parts of the full
	Nekrasov function $\mathcal{Z}$. However, for the symmetry $q_1, q_2\mapsto q_1^{-1}, q_2^{-1}$ we have
	\begin{equation}
	\mathcal{Z}_{cl}(u;q_1^{-1},q_2^{-1}|z)=(q_1q_2)^{-\frac{\log^2 u}{2\log q_1\log q_2}}\mathcal{Z}_{cl}(u;q_1,q_2|z), \label{qclassym}
	\end{equation} 
	and
	\begin{equation}
	\mathcal{Z}_{1-loop}(u;q_1^{-1},q_2^{-1})=(uq_1;q_1)^{-1}_{\infty} (u^{-1};q_1)^{-1}_{\infty}  (u;q_2)^{-1}_{\infty} (q_2u^{-1};q_2)^{-1}_{\infty} \mathcal{Z}_{1-loop}(u;q_1,q_2) \label{qloopassym}
	\end{equation} 
	(where we used properties $(z;q_1^{-1},q_2,\ldots q_N)_{\infty}=(zq_1;q_1,\ldots q_N)^{-1}_{\infty}$ and \eqref{qshift} successively),
	so symmetry is broken for all cases except $q_1q_2=1$.
	
	\subsection{Nakajima-Yoshioka blowup relations}
	\label{ssec:NYblowup}
	
	Functions $\mathcal{Z}^{[l]}(u;q_1,q_2|z)$ are known to satisfy Nakajima-Yoshioka blowup relations \cite{NY05}, \cite{GNY06}
	\begin{equation}\label{eq:Z=ZZ}
	\beta^{d}_j(q_1,q_2|z)\mathcal{Z}^{[l]}(u;q_1,q_2|z)=\sum_{n \in \mathbb{Z}+j/2} 
	\Big(\mathcal{Z}^{[l]}(uq_1^{2n};q_1,q_2q_1^{-1}|q_1^{d+\frac12l(j-1)}z)
	\mathcal{Z}^{[l]}(uq_2^{2n};q_1q_2^{-1},q_2|q_2^{d+\frac12l(j-1)}z)\Big), 
	\end{equation}
	for $j=0,l\in\mathbb{Z}/2\mathbb{Z}$, $d=-1,0,1$. Such coefficcients $\beta^{d}_j$ turned out to be independent from $l$, they are given in the table
	\medskip
	
	\begin{tabular}{|l|l|l|l|}
		\hline
		$\beta^d_j$ & $d=-1$ & $d=0$ & $d=1$ \\
		\hline
		$j=0$ & 1 & 1 & 1 \\
		\hline
		$j=1$ & $(q_1^{-1}q_2^{-1}z)^{1/4}$ & 0 &  $-(q_1q_2z)^{1/4}$ \\
		\hline
	\end{tabular}
	\medskip
	
	These results were proved in Theorem 2.4 in \cite{NY05} for $l=0$ and in Theorem 2.11 in \cite{NY09} for the case $l=1,2$,
	$j=0$ and case $l=1,2$, $j=1$, $d=0$.
	We have not found cases $l=1$, $j=1$, $d=\pm1$ in the literature but they follow from the results of
	\cite{NY09} (see footnote $6$ in \cite{BS18}) 
	One could ask about higher order Nakajima-Yoshioka blowup relations (with $|d|>1$). We will show, how to obtain
	these relations (especially $d=\pm2$) in Subsection \ref{ssec:hoNY}. 
	
	Note that, in fact, Nakajima-Yoshioka blowup relations are relations on $\mathcal{Z}_{inst}^{[l]}$,
	and $\mathcal{Z}_{cl}$ and $\mathcal{Z}_{1-loop}$ give $(\textbf{l}^d_n)^{-1}z^{n^2}$, where $z$-independent coefficcient $\textbf{l}^d_n$
	is called blowup factor.  
	
	\begin{Remark}\label{rem:binv}
		According to symmetries \eqref{CSlinv}, \eqref{qclassym}, \eqref{qloopassym}
		coefficients $\beta_j^{d,[l]}$ for arbitrary $d$ satisfy
		\begin{equation}
		\beta_j^{d,[l]}(q_1^{-1},q_2^{-1}|z)=(-1)^j \beta_j^{-d,[-l]}(q_1,q_2|z),\label{betasymm}
		\end{equation}
		where we restored $\beta_j^{d,[l]}$ dependence on $l$ for arbitrary $d$.  
	\end{Remark}

	\section{Blowup relations on $\mathbb{C}^2/\mathbb{Z}_2$ from blowup relations on $\mathbb{C}^2$}
	\label{sec:-2}
	\subsection{General scheme}
	\label{ssec:gs}
	There are also blowup relations on Nekrasov partition functions on $\mathbb{C}^2/\mathbb{Z}_2$ which have form
	\begin{equation}\label{eq:Z2=ZZ}
	\mathcal{Z}^{[l]}_{X_2}(u;q_1,q_2|z)=\sum_{n \in \mathbb{Z}+j/2}
	\mathrm{D}\Big(\mathcal{Z}^{[l]}(uq_1^{2n};q_1^2,q_2q_1^{-1}|z),\mathcal{Z}^{[l]}(uq_2^{2n};q_1q_2^{-1},q_2^2|z)\Big), \, j=0,1
	\end{equation}
	where $\mathcal{Z}^{[l]}_{X_2}$ is certain instanton partition function on $X_2$, which is minimal resolution 
	of $\mathbb{C}^2/\mathbb{Z}_2$. Explicit type of this partition function depends on bilinear $q$-difference in $z$ operator $\mathrm{D}$ in r.h.s. 
	
	As already mentioned in Introduction, such relations can be used to prove formula \eqref{Mastertau} for tau
	functions of $q$-difference Painlev\'e equations.
	However, to do this one needs only bilinear relations on Nekrasov partition functions, which can be obtained
	from the above relations by exluding $\mathcal{Z}^{[l]}_{X_2}$.
	These relations are given by sum of following terms
	\begin{equation}
	\widehat{\mathcal{Z}}^{[l]}_d(u;q_1,q_2|z)=\sum_{n\in\mathbb{Z}+j/2}\epsilon^n
	\mathcal{Z}^{[l]}(uq_1^{2n};q_1^2,q_2q_1^{-1}|q_1^{d}z)
	\mathcal{Z}^{[l]}(uq_2^{2n},q_1q_2^{-1},q_2^2|q_2^{d}z), \quad d\in\mathbb{Z},\label{bilterms}
	\end{equation}
	with coefficcients, independent from $u$.
	We will see below that sign $\epsilon=\pm1$ is essential only in case $j=1$. 
	
	We call such bilinear relations "$-2$ bilinear relations", emphasize that for these relations
	$ \epsilon_1^{(1)}/\epsilon_2^{(1)}+\epsilon_2^{(2)}/\epsilon_1^{(2)}=-2$ in contrast to
	Nakajima-Yoshioka blowup relations, where such l.h.s. equals $-1$. 
	Our aim is to find approach to derive $-2$ bilinear relations from Nakajima-Yoshioka blowup relations \eqref{eq:Z=ZZ}.
	In this Subsection we will illustrate general scheme on the case, when $l=0$. Generalization
	for $l=1$ will be given in Subsection \ref{ssec:CS}. 
	
	Let us make a certain convolution of $\mathcal{Z}(u;q_1q_2,q_1^{-2}|z)$
	with \eqref{bilterms}
	\begin{equation}
	\sum_{m\in\mathbb{Z}+j/2}\epsilon^m \mathcal{Z}(u(q_1q_2)^{2m};q_1q_2,q_1^{-2}|(q_1q_2)^{d_1+d_2}z) \widehat{\mathcal{Z}}_d(uq_2^{2m};q_1,q_2|q_2^{d_1+d_2}z)
	\label{convol}
	\end{equation}
	or, explicitly writing $-2$ bilinear term
	\begin{equation}
	\begin{aligned}
	\sum_{m,n\in\mathbb{Z}+j/2}\epsilon^{m+n}\mathcal{Z}(uq_1^{2m}q_2^{2m};q_1q_2,q_1^{-2}|(q_1q_2)^{d_1+d_2}z)\\
	\times\mathcal{Z}(uq_1^{2n}q_2^{2m};q_1^2,q_2q_1^{-1}|q_1^{d_2-d_1}q_2^{d_1+d_2}z)
	\mathcal{Z}(uq_2^{2(m+n)},q_1q_2^{-1},q_2^2|q_2^{d_2-d_1}q_2^{d_1+d_2}z), \label{convolexpl}
	\end{aligned}
	\end{equation}
	where we introduced $d_1$ and $d_2$, such that $d=d_2-d_1$.
	Let us make substitution $m=m'+n', n=m'-n'$ in this expression 
	\begin{equation}
	\begin{aligned}
	\epsilon^j \left(\sum_{m'\in\mathbb{Z}+j/2,n'\in\mathbb{Z}}+\epsilon\sum_{m'\in\mathbb{Z}+j/2+\frac12,n'\in\mathbb{Z}+\frac12}\right)
	\mathcal{Z}(u(q_1q_2)^{2(m'+n')};q_1q_2,q_1^{-2}|(q_1q_2)^{d_2+d_1}z)\\
	\mathcal{Z}(u(q_1q_2)^{2m'} (q_1^{-1}q_2)^{2n'};q_1^2,q_2q_1^{-1}|q_1^{d_2-d_1}
	q_2^{d_2+d_1}z)
	\times\mathcal{Z}(uq_2^{4m'},q_1q_2^{-1},q_2^2|q_2^{2d_2}z)
	\end{aligned}
	\end{equation}
	Using Nakajima-Yoshioka blowup relations \eqref{eq:Z=ZZ} for the first pair of Nekrasov partition functions (summing up over $n'$), we obtain
	\begin{equation}
	\begin{aligned}
	\sum_{2m'\in\mathbb{Z}}\epsilon^{2m'}\beta^{d_1}_{2m'+j \mod 2}(q_1q_2,q_2q_1^{-1}|(q_1q_2)^{d_2}z)
	\mathcal{Z}(u(q_1q_2)^{2m'};q_1q_2,q_2q_1^{-1}|(q_1q_2)^{d_2} z)
	\mathcal{Z}(uq_2^{4m'} ,q_1q_2^{-1},q_2^2|q_2^{2d_2}z),
	\end{aligned}
	\end{equation}
	which could be summed up again, using Nakajima-Yoshioka blowup relations \eqref{eq:Z=ZZ}, to
	\begin{equation}
	\left(\sum_{i=0,1}\epsilon^i\beta^{d_1}_{i+j \mod 2}(q_1q_2,q_2q_1^{-1}|(q_1q_2)^{d_2}z) \beta^{d_2}_{i}(q_1q_2,q_2^2|z)\right)\mathcal{Z}(u;q_1q_2,q_2^2|z)\label{coeff}
	\end{equation}
	
	Let us take sum of terms \eqref{bilterms} and make convolution \eqref{convol} with the whole sum.
	Assume that after such convolution we obtain $\mathcal{Z}(u;q_1q_2,q_2^2|z)$ with zero coefficient.
	Note that to make such convolution, uniform for all terms, we should take terms in the sum with the same shift $d_1+d_2$.
	
	Below we prove that the initial sum is also zero in case $j=0$. Case $j=1$ is more subtle: we prove that initial sum vanishes, when
	there are two different convolutions of the initial sum. In principle, we have such possibility, because we could take different pairs
	$d_1$ and $d_2$, s.t. $d_2-d_1=d$. Below for simplicity we denote $d_{12}=d_1+d_2$.
	
	\begin{prop}\label{prop:suff}
		Consider $-2$ bilinear relation on Nekrasov partition functions which is sum of terms \eqref{bilterms}.
		
		(i) Case $j=0$. If there is a vanishing convolution, given by \eqref{convol}, then initial $-2$ bilinear relation is satisfied. 
		
		(ii) Case $j=1$. If there is at least two vanishing convolutions, s.t. $d_{12}\neq d_{12}'$, then initial $-2$ bilinear relation is satisfied. 
	\end{prop}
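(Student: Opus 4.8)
The plan is to argue contrapositively: assume the $-2$ bilinear relation $\widehat{\mathcal{Z}}=\sum_{\text{terms}}\widehat{\mathcal{Z}}^{[0]}_d(u;q_1,q_2|z)$ is \emph{not} identically zero, and show that no convolution of the form \eqref{convol} can vanish (case $j=0$), resp.\ that at most one such convolution with a given $d_{12}$ can vanish (case $j=1$). Since the relevant Nekrasov partition functions are convergent power series in $z$ (Proposition \ref{prop:convF}), $\widehat{\mathcal{Z}}$ is itself a power series in $z$ (up to an overall fractional power coming from the classical and $1$-loop prefactors, which is uniform across all terms because we forced equal shifts $d_{12}$), and it suffices to track its lowest nonvanishing $z$-coefficient. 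Let $z^{N}$ be the lowest power of $z$ appearing with a nonzero coefficient in $\widehat{\mathcal{Z}}$, and let $f_{N}(u)$ be that coefficient; it is a nonzero rational function (in fact a Laurent polynomial after clearing the standard denominators) of $u$ and $q_1,q_2$.

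Next I would examine what the convolution does to the lowest-order term. Feeding $\widehat{\mathcal{Z}}$ into \eqref{convol} and running the two applications of the Nakajima--Yoshioka relations \eqref{eq:Z=ZZ}, exactly as in the derivation leading to \eqref{coeff}, one sees that the convolution equals $\sum_{n}(\dots)\,\mathcal{Z}(u(q_1q_2)^{2n};\dots)$ summed against $\widehat{\mathcal{Z}}$ evaluated at shifted arguments $uq_2^{2m}$ — and crucially the $z$-grading is preserved by each step (the only $z$-rescalings are by fixed powers of $q_1q_2$ and $q_2$, which do not mix orders). Therefore the lowest $z$-power of the convolution is again $z^{N}$ (or rather $z^{N}$ times the overall prefactor power, identical on both sides), and its coefficient is obtained from $f_{N}$ by the \emph{same} convolution-and-blowup procedure applied order-by-order: schematically, $\sum_{m}(\text{lowest-order factor from }\mathcal{Z}_{cl}\mathcal{Z}_{1\text{-}loop})\cdot f_{N}(uq_2^{2m})$, which after the two blowup summations collapses to $c\cdot f_{N}(u)\cdot(\text{blowup factors})$ for an explicit nonzero scalar $c$ coming from the $z^{0}$-parts of the relevant $\mathcal{Z}$'s and the $\beta$-coefficients. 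Because these blowup factors and the scalar $c$ are nonzero, the convolution's $z^{N}$-coefficient is a nonzero multiple of $f_{N}(u)$, hence nonzero; this contradicts the assumed vanishing of the convolution, proving (i).

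For (ii) the obstruction is that in the odd sector $j=1$ the sign $\epsilon=\pm1$ is genuinely present, and the argument above only shows that \emph{if} the convolution vanishes then a certain $z$-graded piece of $\widehat{\mathcal{Z}}$ lies in the kernel of the order-$N$ convolution operator — which, because of the sign factors $\epsilon^{n}$ twisting the sum over $n\in\mathbb{Z}+1/2$, need not be injective for a single choice of $d_{12}$: the half-integer summation mixes $f_{N}$ with its shift $f_{N}(uq_2^{2})$ in a way governed by $\beta^{d_1}_{i+j\bmod 2}\,\beta^{d_2}_i$ as in \eqref{coeff}, and for one value of $d_{12}$ this combination can annihilate a nonzero $f_N$. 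The remedy is to use two vanishing convolutions with $d_{12}\neq d_{12}'$: these give two linear conditions on the same lowest-order data $f_{N}$, with different $z$-rescaling exponents $q_2^{2d_2}$ versus $q_2^{2d_2'}$ entering the blowup factors and hence different effective operators; I would show that a nonzero $f_{N}$ cannot satisfy both simultaneously, by comparing the two resulting functional equations for $f_{N}(u)$ and extracting a contradiction from the distinct scaling weights (e.g.\ looking at the leading behaviour in $u$, or at a specific specialization of $u$ where one operator is invertible). Peeling off $z^{N}$ and inducting on the order then finishes the proof.

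The main obstacle I anticipate is precisely this injectivity question in part (ii): making rigorous that two convolutions with $d_{12}\neq d_{12}'$ have no common nonzero element in the kernel at each fixed $z$-order. This requires understanding the order-$N$ convolution operator concretely enough — most likely writing it as multiplication by an explicit function of $u$ built from $\beta$-coefficients and the $z^0$-terms of the $\mathcal{Z}$'s, using \eqref{coeff} — and then checking that the two such multipliers (for the two shifts) have no common zero as functions of $u$. I expect this to hinge on the explicit form of the $\beta^d_j$ in the table (in particular the vanishing $\beta^0_1=0$ and the $\pm(q_1q_2z)^{1/4}$ entries), and on the classical factor's Gaussian $u$-dependence \eqref{Zcl5}, which should force the two multipliers to differ by a nontrivial power of $u$ and thus share no zero. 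The $j=0$ case, by contrast, should go through cleanly since there $\beta^{d_1}_0\beta^{d_2}_0=1$ and the convolution operator at lowest order is multiplication by a manifestly nonzero constant.
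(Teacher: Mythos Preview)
Your overall strategy --- look at the lowest nonvanishing $z$-coefficient of $\widehat{\mathcal{Z}}$ and show it survives in the convolution --- is exactly the paper's induction on $z$-order, just phrased contrapositively. The gap is in your mechanism for \emph{why} it survives.

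You repeatedly invoke the Nakajima--Yoshioka relations inside the proof (``running the two applications of the Nakajima--Yoshioka relations \ldots'', ``after the two blowup summations collapses to $c\cdot f_N(u)$''). But the blowup relations play no role in the proof of this Proposition; they are used \emph{elsewhere} to check that a given convolution vanishes. Here one only knows that the convolution \eqref{convol} is zero as a power series, and must deduce that $\widehat{\mathcal{Z}}=0$. The sum $\sum_m (\dots) f_N(uq_2^{2m})$ does not ``collapse'' to $f_N(u)$ by any blowup identity: $f_N$ is the coefficient of the unknown bilinear expression, not a Nekrasov function. What actually happens is much simpler. Writing the convoluting factor as $z^{\text{prefactor}}\sum_{p\ge 0} b_p(u)z^p$ with $b_0\neq 0$, the convolution carries a weight $z^{m^2/2}$ from the classical part. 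For $j=0$, $m\in\mathbb{Z}$, so $m=0$ is the \emph{unique} minimizer of $m^2/2$; hence the $z^N$-coefficient of the convolution is $b_0(u)\,f_N(u)$ up to a nonzero scalar, and vanishing forces $f_N=0$. No $\beta$'s enter.

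For (ii) you correctly sense that the odd sector is harder, but the reason is not the sign $\epsilon$ or a question of ``common zeros of multipliers''. It is that for $j=1$ one has $m\in\mathbb{Z}+\tfrac12$, so $m=\pm\tfrac12$ are \emph{both} minimizers of $m^2/2$. The lowest-order coefficient of the convolution is then a linear combination of $f_N(uq_2)$ and $f_N(uq_2^{-1})$ with coefficients $u^{\pm d_{12}/4}(q_1q_2^2)^{d_{12}/8}b_0(u(q_1q_2)^{\pm1})$. One vanishing convolution gives a single linear relation between these two unknowns, which is not enough. Two convolutions with $d_{12}\neq d_{12}'$ give a $2\times 2$ system whose determinant is proportional to $u^{(d_{12}-d_{12}')/4}-u^{(d_{12}'-d_{12})/4}$, nonzero for generic $u$. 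That is the injectivity you were looking for; it comes from the classical-part weight $u^{d_{12}m/2}$, not from the $\beta$-table.
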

	\begin{proof}
		In terms of $\mathcal{Z}_{inst}$, summands \eqref{bilterms} have the form
		\begin{equation}
		\begin{aligned}
		(q_1^{-1}q_2^{-1}z)^{-\frac{\log^2u}{8\log q_1 \log q_2}}\sum_{n\in\mathbb{Z}+j/2}
		\frac{\epsilon^n u^{\frac{nd}2} (q_1q_2)^{\frac{n^2d}2} (q_1^{-1}q_2^{-1}z)^{\frac{n^2}2}}{\mathfrak{l}^d_n}\mathcal{Z}_{inst}(uq_1^{2n};q_1^2,q_2q_1^{-1}|q_1^{d}z)
		\mathcal{Z}_{inst}(uq_2^{2n},q_1q_2^{-1},q_2^2|q_2^{d}z),
		\end{aligned}
		\end{equation}
		where $\mathfrak{l}^d_n$ are blowup coefficcients for $\mathbb{C}^2/\mathbb{Z}_2$ blowup, resulting from $\mathcal{Z}_{1-loop}$.
		
		We convolute relation from these summands with
		\begin{eqnarray}
		\mathcal{Z}(u;q_1q_2,q_1^{-2}|z)=z^{\frac{\log^2u}{8\log q_1 (\log q_2+\log q_1)}}\sum_{p=0}^{+\infty}b_p(u)z^p,\\
		b_0(u)=(q_1 q_2^{-1})^{\frac{\log^2u}{8\log q_1 (\log q_2+\log q_1)}}\mathcal{Z}_{1-loop}(u;q_1q_2,q_1^{-2})
		\end{eqnarray}
		
		(i) In the case $j=0$ the whole sum, as series in $z$, has form 
		\begin{equation}
		z^{-\frac{\log^2u}{8\log q_1 \log q_2}} \sum_{k=0}^{+\infty}c_k(u) \epsilon^k z^{k/2},
		\end{equation}
		so we see that $-2$ bilinear relation splits into relations with integer and half-integer powers of $z$ and sign $\epsilon$
		signifies that.
		
		After convolution we have up to common factor
		\begin{equation}
		\begin{aligned}
		\sum_{m\in\mathbb{Z}}\epsilon^{m} u^{(d_{12})\frac{m}2} (q_1q_2^2)^{(d_{12})\frac{m^2}2} z^{\frac{m^2}2}
		\sum_{p=0}^{+\infty}b_p(uq_1^{2m}q_2^{2m})((q_1q_2)^{(d_{12})}z)^p 
		\sum_{k=0}^{+\infty}c_k(uq_2^{2m}) (q_2^{d_{12}}z)^{k/2}=0 
		\end{aligned}
		\end{equation}
		Then, equating to zero coefficcients in front of power $z^{n/2}$, $n\in\mathbb{Z}_{\geq0}$, due to
		$b_0(u)\neq0$ we obtain $c_0=0, c_1=0, c_2=0 \ldots$ successively.
		
		(ii) In the case $j=1$ there is no such splitting
		\begin{equation}
		z^{\frac18-\frac{\log^2u}{8\log q_1 \log q_2}} \sum_{k=0}^{+\infty}d_k(u) z^{k}.
		\end{equation}
		
		Two vanishing convolutions with $d_{12}\neq d_{12}'$ give us up to common factor
		\begin{eqnarray}
		\sum_{m\in\mathbb{Z}+1/2}\epsilon^{m} u^{d_{12}\frac{m}2} (q_1q_2^2)^{d_{12}\frac{m^2}2} z^{\frac{m^2}2}
		\sum_{p=0}^{+\infty}b_p(u(q_1q_2)^{2m})((q_1q_2)^{d_{12}}z)^p 
		\sum_{k=0}^{+\infty}d_k(uq_2^{2m}) (q_2^{d_{12}}z)^{k}=0 \\
		\sum_{m\in\mathbb{Z}+1/2}\epsilon^{m} u^{d_{12}'\frac{m}2} (q_1q_2^2)^{d_{12}'\frac{m^2}2} z^{\frac{m^2}2}
		\sum_{p=0}^{+\infty}b_p(u(q_1q_2)^{2m})((q_1q_2)^{d_{12}'}z)^p 
		\sum_{k=0}^{+\infty}d_k(uq_2^{2m}) (q_2^{d_{12}'}z)^{k}=0 
		\end{eqnarray}
		Equating to zero coefficcients in front of power $z^{n+1/8}$, $n\in\mathbb{Z}_{\geq0}$, we will obtain $2\times2$ linear system
		on $d_n(uq_2)$ and $d_n(uq_2^{-1})$ with inhomogeneity that equals linear combination of $d_{k}, k<n$.
		Fundamental matrix of this system is
		\begin{equation}
		\begin{pmatrix}
		u^{d_{12}/4} (q_1q_2^2)^{d_{12}/8} b_0(uq_1q_2) q_2^{d_{12}p}& u^{-d_{12}/4} (q_1q_2^2)^{d_{12}/8} b_0(uq_1^{-1}q_2^{-1}) q_2^{d_{12}p}\\
		u^{d_{12}'/4} (q_1q_2^2)^{d_{12}'/8} b_0(uq_1q_2) q_2^{d_{12}'p}& u^{-d_{12}'/4} (q_1q_2^2)^{d_{12}'/8} b_0(uq_1^{-1}q_2^{-1}) q_2^{d_{12'}p}
		\end{pmatrix},
		\end{equation}
		its determinant is equal to
		\begin{equation}
		(q_1q_2^2)^{(d_{12}+d_{12}')/8} b_0(uq_1q_2) b_0(u(q_1q_2)^{-1}) q_2^{(d_{12}+d_{12}')p}(u^{(d_{12}-d_{12}')/4}-u^{(d_{12}'-d_{12})/4}), 
		\end{equation}
		which is nonzero for general $u$. So we obtain $d_0=0, d_1=0, d_2=0 \ldots$ successively.
	\end{proof}

	We see that in case $j=0$ sign $\epsilon$ corresponds to the branch of square root $z^{1/2}$, and we will omit it in next Subsection.

	\subsection{Relations in even sector}
	\label{ssec:even}
	Let us itemize different $-2$ bilinear relations that we could obtain from above approach.
	As explained above, we can take $-2$ bilinear relation consisting only from terms with same $d_1+d_2$
	We know Nakajima-Yoshioka blowup relations \eqref{eq:Z=ZZ} only for cases $d=-1,0,1$, that's our another restriction.
	
	\begin{itemize}
		\item
		\textbf{Case $d_1+d_2=2$.}
		We have only one term with $d_1=d_2=1$ and coefficient in \eqref{coeff} is nonzero, so there is no relation.
		
		\item \textbf{Case $d_1+d_2=1$.}
		We have two terms: with $d_1=1,d_2=0$ and $d_1=0,d_2=1$.
		Both coefficients in \eqref{coeff} equal $1$. We obtain relation
		\begin{equation}
		\begin{aligned}
		\sum_{n\in\mathbb{Z}}\mathcal{Z}(uq_1^{2n};q_1^2,q_2q_1^{-1}|q_1z)\mathcal{Z}(uq_1^{2n};q_1q_2^{-1},q_2^2|q_2z)=
		\sum_{n\in\mathbb{Z}}\mathcal{Z}(uq_1^{2n};q_1^2,q_2q_1^{-1}|q_1^{-1}z)\mathcal{Z}(uq_1^{2n};q_1q_2^{-1},q_2^2|q_2^{-1}z),\label{bilconfrel424}
		\end{aligned}
		\end{equation}
		this is conjectured relation (4.24) from \cite{BGM17}.
		
		\item  \textbf{Case $d_1+d_2=0$.}
		We have three terms: with $d_1=0,d_2=0$, $d_1=1, d_2=-1$ and  $d_1=-1,d_2=1$.
		Coefficcients in \eqref{coeff} are equal to $1$, $1-(q_1q_2)^{-1/2}z^{1/2}$ and $1-(q_1q_2)^{1/2}z^{1/2}$ respectively. We obtain relations 
		\begin{equation}
		\begin{aligned}
		\sum_{n\in\mathbb{Z}}\mathcal{Z}(uq_1^{2n};q_1^2,q_2q_1^{-1}|q_1^{2}z)\mathcal{Z}(uq_1^{2n};q_1q_2^{-1},q_2^2|q_2^{2}z)\\
		=(1-(q_1q_2)^{1/2}z^{1/2})\sum_{n\in\mathbb{Z}}\mathcal{Z}(uq_1^{2n};q_1^2,q_2q_1^{-1}|z)\mathcal{Z}(uq_1^{2n};q_1q_2^{-1},q_2^2|z)\label{bilconfrel}
		\end{aligned}
		\end{equation}
		and
		\begin{equation}
		\begin{aligned}
		\sum_{n\in\mathbb{Z}}\mathcal{Z}(uq_1^{2n};q_1^2,q_2q_1^{-1}|q_1^{-2}z)\mathcal{Z}(uq_1^{2n};q_1q_2^{-1},q_2^2|q_2^{-2}z)\\
		=(1-(q_1q_2)^{-1/2}z^{1/2})\sum_{n\in\mathbb{Z}}\mathcal{Z}(uq_1^{2n};q_1^2,q_2q_1^{-1}|z)\mathcal{Z}(uq_1^{2n};q_1q_2^{-1},q_2^2|z)
		\end{aligned}
		\end{equation}
		which is equivalent to previous relation \eqref{bilconfrel} under substitution $q_1,q_2\mapsto q_1^{-1},q_2^{-1}$
		according to Remark \ref{rem:binv}.
		Relation \eqref{bilconfrel} is conjectured relation (B.5) from \cite{BS16q} and (4.25) from \cite{BGM17}
	\end{itemize}
	
	One can see easily that cases $d_1+d_2=-1,-2$ don't give any new relations according to Remark \ref{rem:binv}.
	
	\subsection{Relations in odd sector}
	\label{ssec:odd}
	According to Prop. \ref{prop:suff} (ii) we start from itemizing different vanishing convolutions. 
	After that we will look for appropriate pairs of convolutions and write corresponding $-2$ bilinear relations. 
	\begin{itemize}
		\item \textbf{Case $d_1+d_2=2$.}
		We have only one term with $d_1=d_2=1$ and coefficient in \eqref{coeff} is $-(q_2^{3}q_1z)^{1/4}(1+\epsilon)$,
		so we have vanishing convolution only for $\epsilon=-1$.
		\item \textbf{Case $d_1+d_2=1$.}
		We have two terms: with $d_1=1,d_2=0$ and $d_1=0,d_2=1$.
		Coefficients in \eqref{coeff} equal $-(q_2^2z)^{1/4}$ and $-\epsilon (q_1q_2)^{1/4} (q_2^2z)^{1/4}$ respectively,
		so we have two vanishing convolutions with $\epsilon=\pm1$.
		
		\item \textbf{Case $d_1+d_2=0$.}
		We have three terms: with $d_1=0,d_2=0$, $d_1=1,d_2=-1$ and $d_1=-1,d_2=1$.
		Coefficcients are equal to $0$, $(q_2q_1^{-1} z)^{1/4}(\epsilon q_2^{-1}-1)$ and $\epsilon q_2(q_2q_1^{-1} z)^{1/4}(\epsilon q_2^{-1}-1)$
		respectively, so we have vanishing convolutions consisting from first term or from next two terms, both with $\epsilon=\pm1$. 
		
		\item \textbf{Case $d_1+d_2=-1$.}
		We have two terms: with $d_1=-1,d_2=0$ and $d_1=0,d_2=-1$.
		Coefficients in \eqref{coeff} equal $(q_2^{-2}z)^{1/4}$ and $\epsilon (q_1q_2)^{-1/4} (q_2^{-2}z)^{1/4}$ respectively,
		so we have two vanishing convolutions with $\epsilon=\pm1$.
		
		\item \textbf{Case $d_1+d_2=-2$.}
		We have only one term with $d_1=d_2=-1$ and coefficient in \eqref{coeff} is $(q_2^{-3}q_1^{-1}z)^{1/4}(1+\epsilon)$, so we have vanishing
		convolution only for $\epsilon=-1$.
	\end{itemize}
	
	We see that we have $-2$ bilinear relations, which follow from vanishing conditions in the cases $d_1+d_2=\pm1$ both for $\epsilon=\pm1$
	\begin{equation}
	\begin{aligned}\label{bilconfrel423}
	\sum_{n\in\mathbb{Z}+1/2}\epsilon^n\mathcal{Z}(uq_1^{2n};q_1^2,q_2q_1^{-1}|q_1z)\mathcal{Z}(uq_1^{2n};q_1q_2^{-1},q_2^2|q_2z)\\=
	\epsilon (q_1q_2)^{1/4} \sum_{n\in\mathbb{Z}+1/2}\epsilon^n\mathcal{Z}(uq_1^{2n};q_1^2,q_2q_1^{-1}|q_1^{-1}z)\mathcal{Z}(uq_1^{2n};q_1q_2^{-1},q_2^2|q_2^{-1}z).
	\end{aligned}
	\end{equation}
	This is just relation (4.23) from \cite{BGM17}, it could be seen by taking sum and difference of relations with two different $\epsilon$. 
	
	We also have $-2$ bilinear relations, which follow from vanishing conditions in cases $d_1+d_2=0$ and $d_1+d_2=\pm2$ only for $\epsilon=-1$ 
	\begin{equation}
	\sum_{n\in\mathbb{Z}+1/2}(-1)^n\mathcal{Z}(uq_1^{2n};q_1^2,q_2q_1^{-1}|z)\mathcal{Z}(uq_1^{2n};q_1q_2^{-1},q_2^2|z)=0. \label{bilconfrel422}
	\end{equation}
	This is just relation (4.22) from \cite{BGM17}.
	
	This exhausts the relations, which follow from the above vanishing convolutions.
	
	\subsection{Chern-Simons modification}
	\label{ssec:CS}
	Obtaining $-2$ bilinear relations on $\mathcal{Z}^{[l]}(u;q_1,q_2|z)$ by our approach seems to be much more subtle and cumbersome. 
	Let us illustrate this approach on relation 
	\begin{equation}
	\begin{aligned}
	\sum_{n\in\mathbb{Z}}\mathcal{Z}^{[1]}(uq_1^{2n};q_1^2,q_2q_1^{-1}|q_1^{2}z)\mathcal{Z}^{[1]}(uq_1^{2n};q_1q_2^{-1},q_2^2|q_2^{2}z)\\
	=(1-(q_1q_2)^{1/2}z^{1/2})\sum_{n\in\mathbb{Z}}\mathcal{Z}^{[1]}(uq_1^{2n};q_1^2,q_2q_1^{-1}|q_1z)\mathcal{Z}^{[1]}(uq_1^{2n};q_1q_2^{-1},q_2^2|q_2z),
	\label{bilconfrel1}
	\end{aligned}
	\end{equation}
	which for $q_1q_2=1$ was proposed in \cite{BGM18} (in terms of tau functions for $k=1, N=2$ see formula (3.7) in loc. cit.).
	
	We should modify our convolution \eqref{convol} for $l=1$ in the following way. Split our $-2$ bilinear relation into two parts with integer
	and half-integer powers of $z$ (up to common factor)
	\begin{equation}
	z^{-\frac{\log^2u}{8\log q_1 \log q_2}} (\sum_{k=0}^{+\infty}c_{2k}(u) z^{k}+\epsilon\sum_{k=0}^{+\infty}c_{2k+1}(u)  z^{k+1/2})=
	\widehat{Z}^{[1]}_{0}(z)+\widehat{Z}^{[1]}_{1}(z)
	\end{equation}
	Then make a modified convolution (braces denote fractional part of the number)
	\begin{equation}
	\sum_{m\in\mathbb{Z}, p=0,1}  \mathcal{Z}^{[1]}(u(q_1q_2)^{2m};q_1q_2,q_1^{-2}|(q_1q_2)^{l(2\{(m+p)/2\}-1)}z)\widehat{Z}^{[1]}_{n}(uq_2^{2m},q_1,q_2;q_2^{l(2\{(m+p)/2\}-1)}z),\label{CSconvol}
	\end{equation}
	namely, convolution shift become dependent on relative parity of $m$ and $p$
	
	As before, using Nakajima-Yoshioka blowup relations \eqref{eq:Z=ZZ} twice we obtain 
	\begin{equation}
	\beta^{-1}_0 \beta^1_0+\beta^{-1}_1 (q_1q_2,q_2 q_1^{-1}|q_1 q_2 z) \beta_1^1(q_1q_2,q_2^2|z)=\beta_0^0 \beta_0^0+\beta_1^0 \beta_1^0-(q_1q_2)^{1/2}z^{1/2}(\beta_0^0 \beta_0^1+q_2^{-1/2}\beta_1^{-1} \beta_1^0), 
	\end{equation}
	where we wrote dependence on variables only where it is necessary. This expression turns out to be an identity.
	
	Finally, above convolution, written as sum of $z$-powers, is
	\begin{equation}
	\begin{aligned}
	\sum_{m\in\mathbb{Z}}\epsilon^{m}  z^{\frac{m^2}2}
	\sum_{p=0}^{+\infty}b_p(uq_1^{2m}q_2^{2m})((q_1q_2)^{l(2\{(m+p)/2\}-1)}z)^p 
	\sum_{k=0}^{+\infty}c_k(uq_2^{2m}) (q_2^{l(2\{(m+p)/2\}-1)}z)^{k/2}=0, 
	\end{aligned}
	\end{equation}
	and, as before, $c_n=0, n\geq0$ successively as before. So we obtained the proof of \eqref{bilconfrel1}.
	
	\section{Applications}
	\label{sec:appl}
	\subsection{Nakajima-Yoshioka blowup relations of higher order}
	\label{ssec:hoNY}
	Using our approach in opposite direction, we could obtain Nakajima-Yoshioka blowup relation \eqref{eq:Z=ZZ}
	for $l=0$, $j=0$, $d=2$. We could always write such relation with unknown function $\beta_2^{0,1}$, which apriori
	is a power series in $z$ and dependens on $u$. Let's calculate it. 
	Let us take convolution \eqref{convol} of \eqref{bilconfrel}, taking $d_1=0, d_2=2$ and $d_1=d_2=1$ for the corresponding summands of the relation.
	Because \eqref{bilconfrel} is already proved, reducing convolution to linear combination of Nekrasov partition functions as was done
	in Subsection \ref{ssec:gs}, we obtain
	\begin{equation}
	\beta^0_0 \beta^2_0(u;q_1q_2,q_2^2|z)+\beta^0_1 \beta^2_1=(1-(q_1q_2^3z)^{1/2}) (\beta^1_0\beta^1_0+\beta^1_1(q_1q_2, q_2q_1^{-1}|q_1q_2z)\beta^1_1
	(q_1q_2, q_2^2|z))=1-q_1q_2^3 z, 
	\end{equation}
	where we wrote dependence on variables only where it is necessary. So, thanks to $\beta_1^0=0$, we find $\beta^2_0$
	\begin{equation}
	\beta^2_0(q_1,q_2|z)=1-q_1q_2z \label{beta20}
	\end{equation}
	and, according to \eqref{betasymm}
	\begin{equation}
	\beta^{-2}_0(q_1,q_2|z)=1-q_1^{-1}q_2^{-1}z. \label{betam20}
	\end{equation}
	
	In the same manner we can find $\beta^{2,[1]}_0(q_1,q_2|z)$ for $l=1$ (in fact, it differs from above $\beta^2_0$). Analogous
	calculation with \eqref{bilconfrel1} and convolution \eqref{CSconvol}, where other shift $l(2\{(m+p)/2\}-1)\mapsto 2+l(2\{(m+p)/2\}-1)$ is taken,
	we obtain
	\begin{equation}
	\beta^{0}_0 \beta^2_0(u;q_1q_2, q_2^2|z)+\beta^0_1 \beta^2_1=
	\beta_0^1 \beta_0^1+\beta_1^1(q_1q_2, q_2q_1^{-1}|q_1q_2 z) \beta_1^1 (q_1q_2, q_2^2|z)-(q_1q_2^3)^{1/2}z^{1/2}
	(\beta_0^1 \beta_0^2(u;q_1q_2, q_2^2|z)+q_2^{-1}\beta_1^0 \beta_1^1), 
	\end{equation}
	so we find that $\beta^{2,[1]}_0(q_1,q_2|z)=1$, which differ from \eqref{beta20}.
	
	\subsection{Proof of equalities \eqref{Z2=Z0} and \eqref{qinv1}}
	\label{ssec:bruteforce}
	
	This Subsection starts from proving equality \eqref{Z2=Z0}, using Nakajima-Yoshioka
	blowup equations \eqref{eq:Z=ZZ} for $l=2$ and $l=0$ in sector $j=0$.
	
	As already mentioned in \cite{BS18}, in terms of topological strings this relation means a relation between
	the geometry of local $\mathbb{F}_0=\mathbb{P}^1\times\mathbb{P}^1$ and local Hirzebruch surface $\mathbb{F}_2$.
	Particularly, the relation between Gopakumar-Vafa invariants of these manifolds is given in e.g. \cite[eq. (94)]{IKP02}.
	Case $q_1q_2=1$ of \eqref{Z2=Z0} appear in \cite{BGM18}, it is proved together with cases $q_1^2q_2=1, q_1q_2^2=1$
	in \cite[Prop. 4.3.]{BS18}. Before this papers we have not found equality \eqref{Z2=Z0} in the literature, but it is maybe known.
	For full Nekrasov functions we have the same equality, because $\mathcal{Z}_{cl}$ and $\mathcal{Z}_{1-loop}$
	do not depend on $l$
	\begin{prop}\label{20equivprop}
		Nekrasov function $\mathcal{Z}^{[2]}$ is equal to $\mathcal{Z}^{[0]}$ up to double $q$-Pochhammer symbol
		\begin{equation}
		\mathcal{Z}^{[2]}(u;q_1,q_2|z)=(z;q_1,q_2)_{\infty}\mathcal{Z}^{[0]}(u;q_1,q_2|z), \label{Z2=Z0:fullz}
		\end{equation}
	\end{prop}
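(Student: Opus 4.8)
The plan is to reduce \eqref{Z2=Z0:fullz} to its instanton counterpart \eqref{Z2=Z0}, which is immediate since $\mathcal{Z}_{cl}$ and $\mathcal{Z}_{1-loop}$ do not depend on the Chern--Simons level, and then to derive \eqref{Z2=Z0} from the $j=0$ Nakajima--Yoshioka relations \eqref{eq:Z=ZZ} at levels $l=0$ and $l=2$. The computational heart is the elementary identity of formal power series in $z$
\[
(q_1^{d}z;q_1,q_2q_1^{-1})_{\infty}\,(q_2^{d}z;q_1q_2^{-1},q_2)_{\infty}=(z;q_1,q_2)_{\infty},\qquad d\in\{-1,0\},
\]
which I would obtain by taking logarithms, using $\log(w;a,b)_{\infty}=-\sum_{k\ge 1}\tfrac{w^{k}}{k(1-a^{k})(1-b^{k})}$, and checking the rational identity $\tfrac{x^{d}}{(1-x)(1-y/x)}+\tfrac{y^{d}}{(1-x/y)(1-y)}=\tfrac1{(1-x)(1-y)}$ for $d=0,-1$ with $x=q_1^{k}$, $y=q_2^{k}$.

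Dividing \eqref{eq:Z=ZZ} by $\mathcal{Z}_{cl}(u;q_1,q_2|z)\mathcal{Z}_{1-loop}(u;q_1,q_2)$ I would rewrite it, for $j=0$, in the form
\[
\mathcal{Z}^{[l]}_{inst}(u;q_1,q_2|z)=\sum_{n\in\mathbb{Z}}\varphi_{n}^{(d')}(u;q_1,q_2|z)\,\mathcal{Z}^{[l]}_{inst}(uq_1^{2n};q_1,q_2q_1^{-1}|q_1^{d'}z)\,\mathcal{Z}^{[l]}_{inst}(uq_2^{2n};q_1q_2^{-1},q_2|q_2^{d'}z),
\]
where the coefficients $\varphi_{n}^{(d')}$ come only from $\mathcal{Z}_{cl}$ and $\mathcal{Z}_{1-loop}$, hence do not depend on $l$, and satisfy $\varphi_{0}^{(d')}=1$ and $\varphi_{n}^{(d')}=O(z^{n^{2}})$ for $n\neq 0$. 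For $l=2$ the admissible shifts are $d'=d-1$ and for $l=0$ they are $d'=d$, so the two levels share the shifts $d'=-1$ and $d'=0$. Using the $q$-Pochhammer identity above one checks that $V:=(z;q_1,q_2)_{\infty}\,\mathcal{Z}^{[0]}_{inst}$ also satisfies these two relations: pull the $n$-independent factor $(q_1^{d'}z;q_1,q_2q_1^{-1})_{\infty}(q_2^{d'}z;q_1q_2^{-1},q_2)_{\infty}=(z;q_1,q_2)_{\infty}$ out of the sum and apply the $l=0$ relation. Consequently the difference $W:=\mathcal{Z}^{[2]}_{inst}-V$, which vanishes at $z=0$, obeys the homogeneous recursion $W=\sum_{n}\varphi_{n}^{(d')}\bigl(V_{n,1}W_{n,2}+W_{n,1}V_{n,2}+W_{n,1}W_{n,2}\bigr)$ for $d'\in\{-1,0\}$, the subscripts $1,2$ denoting the two shifted factors.

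To finish I would show $W\equiv 0$ by a minimal-order argument. Assuming $W\not\equiv0$, let $N\ge 1$ be smallest with $w_{N}:=[z^{N}]W\not\equiv0$. Extracting the coefficient of $z^{N}$ from the recursion, only $n=0$ survives (since $\varphi_{n}^{(d')}$ starts at order $z^{n^2}$, $V_{0,i}|_{z=0}=1$ and $W_{0,i}=O(z^{N})$), giving
\[
w_{N}(u;q_1,q_2)=q_1^{d'N}\,w_{N}(u;q_1,q_2q_1^{-1})+q_2^{d'N}\,w_{N}(u;q_1q_2^{-1},q_2),\qquad d'\in\{-1,0\}.
\]
Combining the $d'=0$ and $d'=-1$ relations, using that $w_{N}$ is symmetric under $q_1\leftrightarrow q_2$ and is a rational function of $(u,q_1,q_2)$, I would iterate the resulting first-order $q$-difference relation (in the region $|q_1|>1>|q_2|$ the substitutions $q_2\mapsto q_2q_1^{-K}$, $q_1\mapsto q_1q_2^{-K}$ drive the reduced arguments to $0$) and conclude $w_{N}(u;q_1,q_2)=c\,\bigl((1-q_1^{N})(1-q_2^{N})\bigr)^{-1}$ for a constant $c$; restricting to $q_1q_2=1$, where \eqref{Z2=Z0} is already known (\cite{BS18}; alternatively one checks the corresponding boundary value of $\mathcal{Z}^{[2]}_{inst}$ directly), forces $c=0$, hence $w_{N}\equiv0$, a contradiction.

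I expect this last step to be the main obstacle: a single blowup relation only ties together the parameters on an infinite tree and does not pin the solution down, so one genuinely needs both the $l=0$ and $l=2$ relations — equivalently, both shifts $d'=0$ and $d'=-1$ — together with the rationality of the coefficients to collapse the $q$-difference system, plus one already-known value of \eqref{Z2=Z0} to kill the surviving constant; everything else is bookkeeping with the perturbative factors and the telescoping of $q$-Pochhammer products.
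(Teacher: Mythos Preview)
Your argument shares the backbone of the paper's proof—use the $j=0$ Nakajima--Yoshioka relations at levels $l=0$ and $l=2$ together with the Pochhammer factorisation $(q_1^{d'}z;q_1,q_2q_1^{-1})_\infty(q_2^{d'}z;q_1q_2^{-1},q_2)_\infty=(z;q_1,q_2)_\infty$—but the uniqueness step is organised quite differently. The paper \emph{eliminates} $\mathcal{Z}^{[l]}(u;q_1,q_2|z)$ from \emph{three} relations (shifts $d'=-2,-1,0$; for $l=0$ this needs $\beta_0^{-2}=1-q_1^{-1}q_2^{-1}z$ obtained in Subsection~\ref{ssec:hoNY}) and is left with two bilinear equalities \eqref{qCSdetermrel}. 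Lemma~\ref{determlemma} then shows that at each fixed $(q_1,q_2)$ and each order $k$ the two unknown coefficients of $\mathcal{Z}_{inst}(u;q_1,q_2q_1^{-1}|z)$ and $\mathcal{Z}_{inst}(u;q_1q_2^{-1},q_2|z)$ solve a $2\times2$ linear system with determinant $(q_1^{-k}-1)(q_2^{-k}-1)(q_2^{-k}-q_1^{-k})\neq0$. No iteration in $(q_1,q_2)$ and no external input are needed.

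Your route keeps the diagonal $\mathcal{Z}_{inst}(u;q_1,q_2|z)$ and uses only $d'\in\{-1,0\}$; this spares you $\beta_0^{-2}$, but the uniqueness argument now lives on the $(q_1,q_2)$-plane rather than at a fixed point, and two issues arise. First, the ``drive the arguments to $0$'' step is not justified as written: the Nekrasov coefficients carry explicit negative powers of $q_2$ (already from $(q_1^{-1}q_2^{-1}z)^{|\lambda|}$ and, for $l=2$, from $\mathsf{T}_\lambda$), so regularity of $w_N$ at $q_2\to0$ is not obvious. A cleaner substitute: subtract your two equations to get $(1-q_2^N)w_N(q_1,q_2)=(1-(q_1q_2)^N)w_N(q_1,q_1q_2)$, so $h(q_1,q_2):=(1-q_2^N)w_N(q_1,q_2)$ is a rational function of $q_2$ invariant under $q_2\mapsto q_1q_2$; for generic $q_1$ this forces $h$ constant in $q_2$, and $q_1\leftrightarrow q_2$ symmetry then yields $w_N=c(u)/\bigl((1-q_1^N)(1-q_2^N)\bigr)$. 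Second, even after this fix you still need an outside input to kill $c(u)$: you invoke the $q_1q_2=1$ case from \cite{BS18}. The paper's argument is self-contained in this respect precisely because the extra shift $d'=-2$ supplies the missing equation that your two-shift scheme lacks.
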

	We follow proof of \cite[Prop. 4.3.]{BS18}, sligthly modifying it for arbitrary $q_1,q_2$.
	
	Consider Nakajima-Yoshioka blowup relations \eqref{eq:Z=ZZ} for $l=2$, $j=0$, $d=-1,0,1$ and exclude 
	$\mathcal{Z}^{[2]}(u;q_1,q_2|z)$ from them
	\begin{equation}
	\begin{aligned}
	&\sum_{n\in\mathbb{Z}}\mathcal{Z}^{[2]}(uq_1^{2n};q_1,q_2q_1^{-1}|q_1^{-2}z) \mathcal{Z}^{[2]}(uq_2^{2n};q_1q_2^{-1},q_2|q_2^{-2}z)=\\
	=&\sum_{n\in\mathbb{Z}}\mathcal{Z}^{[2]}(uq_1^{2n};q_1,q_2q_1^{-1}|q_1^{-1}z) \mathcal{Z}^{[2]}(uq_2^{2n};q_1q_2^{-1},q_2|q_2^{-1}z)=\\
	=&\sum_{n\in\mathbb{Z}}\mathcal{Z}^{[2]}(uq_1^{2n};q_1,q_2q_1^{-1}|z) \mathcal{Z}^{[2]}(uq_2^{2n};q_1q_2^{-1},q_2|z)\label{qCSdetermrel}.
	\end{aligned}
	\end{equation}
	
	These equations are equations on $\mathcal{Z}^{[2]}_{inst}$, or bilinear equations
	on coefficients $c_k^{(1)}$, $c_k^{(2)}, k\in\mathbb{Z}_{\geq 0}$
	of the corresponding power series 
	$\mathcal{Z}^{[2]}_{inst}(u;q_1,q_2q_1^{-1}|z)=\sum_{k=0}^{+\infty} c_k^{(1)}(u;q_1,q_2) z^{k}$ and 
	$\mathcal{Z}^{[2]}_{inst}(u;q_1q_2^{-1},q_2|z)=\sum_{k=0}^{+\infty} c_k^{(2)}(u;q_1,q_2) z^{k}$.
	Namely, relations \eqref{qCSdetermrel} split into the relations corresponding to powers $z^{k}, k\in\mathbb{Z}_{\geq0}$ 
	(up to the power $\Lambda^{-\frac{\log^2u}{\log q_1\log q_2}}$ from $\mathcal{Z}_{cl}$). 
	
	\begin{lemma}(\cite[Lemma 4.1.]{BS18})
		Relations \eqref{qCSdetermrel} recursively determine coefficients $c_k^{(1)}, c_k^{(2)}, k\in\mathbb{N}$
		starting from normalization  $c_0^{(1)}=c_0^{(2)}=1$.\label{determlemma}
	\end{lemma}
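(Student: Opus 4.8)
The plan is to read \eqref{qCSdetermrel} as a system of bilinear relations on the Taylor coefficients in $z$ and to show that, power by power in $z$, it recursively pins down $c^{(1)}_k$ and $c^{(2)}_k$. Write, for $d=-1,0,1$,
\[
S_d(u;z)=\sum_{n\in\mathbb{Z}}\mathcal{Z}^{[2]}(uq_1^{2n};q_1,q_2q_1^{-1}|q_1^{d}z)\,\mathcal{Z}^{[2]}(uq_2^{2n};q_1q_2^{-1},q_2|q_2^{d}z),
\]
so that \eqref{qCSdetermrel} says $S_{-1}=S_0=S_1$. First I would peel off the classical and one-loop prefactors: using \eqref{Zcl5} and \eqref{Z1loop5} one finds that the $\mathcal{Z}_{cl}\mathcal{Z}_{1-loop}$ part of the $n$-th summand equals $z^{\kappa(u)+n^{2}}$ times a $z$-independent factor, with $\kappa(u)=-\tfrac{\log^{2}u}{4\log q_1\log q_2}$, and --- this is the one computation worth carrying out carefully --- that for $n=0$ the shifts of $z$ by $q_1^{d}$ and by $q_2^{d}$ cancel against each other in the product, so the $n=0$ prefactor $\mathfrak{l}^{d}_0(u)$ does not depend on $d$ and is nonzero (for $u$ not an integer power of $q$). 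Dividing \eqref{qCSdetermrel} through by the common $z^{\kappa(u)}$ and by $\mathfrak{l}^{0}_0(u)^{-1}$ turns each $S_d$ into a genuine power series in $z$, and the identity $S_{-1}=S_0=S_1$ splits into one scalar relation for every power $z^{N}$, $N\ge0$.

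Next I would isolate, in the coefficient of $z^{N}$, the new unknowns $c^{(1)}_N(u)$ and $c^{(2)}_N(u)$, viewed as functions of $u$ with $q_1,q_2$ fixed. In $S_d$ the summand with $n\ne0$ contributes only products $c^{(1)}_j(uq_1^{2n})\,c^{(2)}_k(uq_2^{2n})$ with $j+k=N-n^{2}<N$, hence nothing of degree $N$; the $n=0$ summand contributes $\mathfrak{l}^{0}_0(u)^{-1}\sum_{j+k=N}c^{(1)}_j(u)c^{(2)}_k(u)\,q_1^{dj}q_2^{dk}$, in which only the extreme terms $c^{(1)}_N(u)q_1^{dN}$ and $c^{(2)}_N(u)q_2^{dN}$ carry a degree-$N$ coefficient, using $c^{(1)}_0=c^{(2)}_0=1$. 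Subtracting the $d=0$ relation from the $d=1$ one and from the $d=-1$ one, the $z^{N}$ coefficients give (after cancelling $\mathfrak{l}^{0}_0(u)^{-1}$) the $2\times2$ linear system
\[
\begin{pmatrix} q_1^{N}-1 & q_2^{N}-1\\ 1-q_1^{-N} & 1-q_2^{-N}\end{pmatrix}
\begin{pmatrix} c^{(1)}_N(u)\\ c^{(2)}_N(u)\end{pmatrix}=
\begin{pmatrix} R^{(1)}_N(u)\\ R^{(2)}_N(u)\end{pmatrix},
\]
where $R^{(1)}_N,R^{(2)}_N$ are explicit expressions in $q_1^{\pm1},q_2^{\pm1}$ and in the coefficients $c^{(i)}_j$, $j<N$, evaluated at $u$ and at the shifted points $uq_1^{2n},uq_2^{2n}$ --- all known by the inductive hypothesis.

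To close the induction I would check invertibility: the determinant of the matrix above equals $(q_1^{N}-q_2^{N})(q_1^{N}-1)(q_2^{N}-1)/(q_1^{N}q_2^{N})$, which under the hypotheses of Proposition \ref{prop:convF} (so $q_1=q^{-m}$, $q_2=q^{n}$ with $m,n\ge1$ and $|q|\ne1$) is nonzero for every $N\ge1$, since then $q_1^{N}\ne1$, $q_2^{N}\ne1$ and $q_1^{N}\ne q_2^{N}$. Hence $c^{(1)}_N(u)$ and $c^{(2)}_N(u)$ are uniquely determined by the lower coefficients, and induction on $N$ starting from $c^{(1)}_0=c^{(2)}_0=1$ proves the lemma. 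I expect the only genuinely delicate point to be the bookkeeping in the first step --- tracking the $z$- and $u$-shifts inside $\mathcal{Z}_{cl}$ and $\mathcal{Z}_{1-loop}$ precisely enough to see that the $n=0$ prefactor is $d$-independent, so that the two rows of the $2\times2$ matrix share a common scalar which then cancels; once that is in place, the remaining steps are essentially formal.
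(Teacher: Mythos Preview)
Your approach is essentially the paper's own: expand in powers of $z$, note that the summands with $n\neq0$ only contribute coefficients $c^{(i)}_{j}$ with $j<N$, isolate a $2\times2$ linear system for $c^{(1)}_N,c^{(2)}_N$ from the $n=0$ term, and check the determinant is nonzero. The paper's proof is just this, with almost no elaboration on the prefactors.

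There is, however, a small but genuine slip in your indexing. In \eqref{qCSdetermrel} the three sums carry $z$-shifts $q_i^{-2},q_i^{-1},q_i^{0}$ (this comes from the $l=2$, $j=0$ shift $d+\tfrac12 l(j-1)=d-1$ in \eqref{eq:Z=ZZ} with $d=-1,0,1$), so in your notation the relation is $S_{-2}=S_{-1}=S_0$, not $S_{-1}=S_0=S_1$. There is no $S_1$ available, and your matrix as written cannot be extracted from \eqref{qCSdetermrel}. Using the correct shifts and forming $S_{-1}-S_0$ and $S_{-2}-S_0$ you get the paper's matrix
\[
\begin{pmatrix} q_1^{-N}-1 & q_2^{-N}-1\\ q_1^{-2N}-1 & q_2^{-2N}-1\end{pmatrix},
\]
whose determinant $(q_1^{-N}-1)(q_2^{-N}-1)(q_2^{-N}-q_1^{-N})$ is nonzero under the same hypotheses you invoke. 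With this correction your argument is complete and coincides with the paper's.
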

	
	\begin{proof}
		Let us take the coefficient of the power $z^{k}$ in the relation \eqref{qCSdetermrel}, then coefficients $c_k^{(1)}$, $c_k^{(2)}$ 
		appear only for $n=0$. Other coefficients in these relations have lower index, so they are known due to the induction supposition.
		Therefore we obtain system of two linear equations on two unknown variables $c^{(1)}_k$, $c_k^{(2)}$.
		The fundamental matrix of this system is as follows
		\begin{equation}
		\begin{pmatrix}
		q_1^{-k}-1 & q_2^{-k}-1\\
		q_1^{-2k}-1 & q_2^{-2k}-1 \label{fm2}
		\end{pmatrix}, 
		\end{equation}
		its determinant equals $(q_1^{-k}-1)(q_2^{-k}-1)(q_2^{-k}-q_1^{-k})$
		which is non-zero iff none of $q_1,q_2, q_1/q_2$ is a root of unity. 
	\end{proof}
	In our sector $|q_1|\lessgtr1, |q_2|\gtrless1$, these special cases are not realized.
	
	\begin{proof}[Proof of the Proposition \ref{20equivprop}]
		Consider Nakajima-Yoshioka blowup relations \eqref{eq:Z=ZZ} for $l=0$, $j=0$, $d=-2,-1,0$ (where we
		needed \eqref{betam20}, found above by our approach) and exclude from
		them $\mathcal{Z}^{[2]}(u;q_1,q_2|z)$
		\begin{equation}
		\begin{aligned}
		&(1-q_1^{-1}q_2^{-1}z)^{-1}\sum_{n\in\mathbb{Z}}\mathcal{Z}^{[0]}(uq_1^{2n};q_1,q_2q_1^{-1}|q_1^{-2}z) \mathcal{Z}^{[0]}(uq_2^{2n};q_1q_2^{-1},q_2|q_2^{-2}z)=\\
		=&\sum_{n\in\mathbb{Z}}\mathcal{Z}^{[0]}(uq_1^{2n};q_1,q_2q_1^{-1}|q_1^{-1}z) \mathcal{Z}^{[0]}(uq_2^{2n};q_1q_2^{-1},q_2|q_2^{-1}z)=\\
		=&\sum_{n\in\mathbb{Z}}\mathcal{Z}^{[0]}(uq_1^{2n};q_1,q_2q_1^{-1}|z) \mathcal{Z}^{[0]}(uq_2^{2n};q_1q_2^{-1},q_2|z).
		\end{aligned}
		\end{equation}
		
		Let us replace $\mathcal{Z}^{[0]}$ with $\td{\mathcal{Z}}^{[2]}$, formally defined by \eqref{Z2=Z0:fullz}. 
		From calculation with $q$-Pochhammers
		\begin{equation}
		\frac{(q_1^{-1}z;q_1,q_2q_1^{-1})_{\infty}(q_2^{-1}z;q_1q_2^{-1},q_2)_{\infty}}
		{(z;q_1,q_2q_1^{-1})_{\infty}(z;q_1q_2^{-1},q_2)_{\infty}}=1, \quad
		\frac{(q_1^{-2}z;q_1,q_2q_1^{-1})_{\infty}(q_2^{-2}z;q_1q_2^{-1},q_2)_{\infty}}
		{(z;q_1,q_2q_1^{-1})_{\infty}(z;q_1q_2^{-1},q_2)_{\infty}}=\frac{1}{1-q_1^{-1}q_2^{-1}z},
		\end{equation}
		we obtain that $\td{\mathcal{Z}}^{[2]}$ satisfies \eqref{qCSdetermrel}.
		Therefore, according to Lemma \ref{determlemma} $\td{\mathcal{Z}}^{[2]}=\mathcal{Z}^{[2]}$  (for general $q_1,q_2$),
		which is desired relation \eqref{Z2=Z0:fullz}.
	\end{proof}
	
	Relation \eqref{qinv1} is obtained, using the same idea, but even simpler.
	\begin{prop}
		Nekrasov instanton partition function  $\mathcal{Z}^{[1]}_{inst}$ is invariant under $q_1,q_2\mapsto q_1^{-1},q_2^{-1}$
		\begin{equation}
		\mathcal{Z}^{[1]}_{inst}(u;q_1,q_2|z)=\mathcal{Z}^{[1]}_{inst}(u;q_1^{-1},q_2^{-1}|z).
		\end{equation}
	\end{prop}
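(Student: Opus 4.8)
The plan is to mimic the proof of Proposition \ref{20equivprop}, replacing the role of the level-$2$ blowup relations by level-$1$ ones and the role of the Lemma \ref{determlemma} determinant argument by its obvious analogue. First I would write down the Nakajima-Yoshioka blowup relations \eqref{eq:Z=ZZ} for $l=1$, $j=0$, $d=-1,0,1$, and eliminate $\mathcal{Z}^{[1]}(u;q_1,q_2|z)$ from the three resulting equations; because $\beta^d_0=1$ for all $d$, this gives a chain of equalities of the same shape as \eqref{qCSdetermrel} but with the right-hand shifts $q_i^{d+\frac12 l(j-1)}=q_i^{d-\frac12}$ appearing because $l=1$, $j=0$. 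Splitting into powers of $z$ (after removing the universal $\mathcal{Z}_{cl}$ prefactor) yields, at order $z^k$, a $2\times 2$ linear system on the two unknown coefficients $c_k^{(1)}(u;q_1,q_2)$ and $c_k^{(2)}(u;q_1,q_2)$ of $\mathcal{Z}^{[1]}_{inst}(u;q_1,q_2q_1^{-1}|z)$ and $\mathcal{Z}^{[1]}_{inst}(u;q_1q_2^{-1},q_2|z)$, with inhomogeneity built from lower-order (hence known) coefficients, exactly as in Lemma \ref{determlemma}.

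Next I would record the analogue of Lemma \ref{determlemma}: these relations recursively determine $c_k^{(1)},c_k^{(2)}$ for $k\ge1$ from the normalization $c_0^{(1)}=c_0^{(2)}=1$. The fundamental matrix will be the one built from the shift factors $q_i^{d-1/2}$ at $d=-1,0$ (after eliminating $d=1$, or any two of the three), so up to overall powers of $q_1,q_2$ its determinant factors as a product of terms $(q_i^{k/2}-1)$ and $(q_1^{k/2}-q_2^{k/2})$ type expressions, which is nonzero in our sector $|q_1|\lessgtr1$, $|q_2|\gtrless1$ (no $q_1$, $q_2$, $q_1/q_2$ is a root of unity), precisely as in the $l=2$ case. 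So the instanton coefficients of $\mathcal{Z}^{[1]}$ in the two "hook-shifted" $\Omega$-backgrounds are uniquely pinned down by the blowup relations plus normalization.

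Finally I would check that the candidate $\td{\mathcal{Z}}^{[1]}_{inst}(u;q_1,q_2|z):=\mathcal{Z}^{[1]}_{inst}(u;q_1^{-1},q_2^{-1}|z)$ satisfies the same system. For this I would apply the blowup relations \eqref{eq:Z=ZZ} for $l=1$ at $(q_1^{-1},q_2^{-1})$ and use Remark \ref{rem:binv}, which tells us how $\beta^{d,[l]}_0$ transforms under $q_i\mapsto q_i^{-1}$ (namely $\beta^{d,[1]}_0(q_1^{-1},q_2^{-1}|z)=\beta^{-d,[-1]}_0(q_1,q_2|z)=1$); together with the symmetry \eqref{CSlinv} relating level $1$ and level $-1$ this converts the $(q_1^{-1},q_2^{-1})$-relations into relations of exactly the form \eqref{qCSdetermrel} for $\mathcal{Z}^{[1]}_{inst}$ at $(q_1,q_2)$. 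Since $\td{\mathcal{Z}}^{[1]}_{inst}$ also has normalized constant term $1$, the uniqueness statement forces $\td{\mathcal{Z}}^{[1]}_{inst}=\mathcal{Z}^{[1]}_{inst}$, which is the claim; passing from $\mathcal{Z}^{[1]}_{inst}$ to the full $\mathcal{Z}^{[1]}$ is immaterial since \eqref{qinv1} is stated for the instanton part (and anyway $\mathcal{Z}_{cl},\mathcal{Z}_{1-loop}$ are not invariant under $q_i\mapsto q_i^{-1}$, which is why the statement is for $\mathcal{Z}_{inst}$ only). The main obstacle I anticipate is purely bookkeeping: making sure that after eliminating $\mathcal{Z}^{[1]}$ and dividing out the classical prefactor, the shift exponents on the $z$-variables in the two hook backgrounds really do produce an invertible $2\times2$ system at every order $k\ge1$ in our sector — i.e. checking that the relevant $q$-determinant does not vanish — and correctly tracking the half-integer shifts $q_i^{d-1/2}$ coming from the $l(j-1)/2$ term with $l=1$, $j=0$, which is the one place the $l=1$ computation genuinely differs from the $l=2$, $j=0$ computation where that term vanished.
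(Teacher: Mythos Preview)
Your strategy mirrors the paper's, but there is a genuine gap in the step where you claim that the candidate $\td{\mathcal{Z}}^{[1]}_{inst}(u;q_1,q_2|z):=\mathcal{Z}^{[1]}_{inst}(u;q_1^{-1},q_2^{-1}|z)$ satisfies \emph{the same} bilinear system as $\mathcal{Z}^{[1]}_{inst}$. Write out the shifts: for $l=1$, $j=0$, $d=-1,0,1$ the arguments in \eqref{eq:Z=ZZ} are $q_i^{d-1/2}z$, so the eliminated chain for $\mathcal{Z}^{[1]}$ reads
\[
\textstyle\sum_n(\cdots|q_i^{-3/2}z)=\sum_n(\cdots|q_i^{-1/2}z)=\sum_n(\cdots|q_i^{1/2}z).
\]
When you take the $l=1$ relations at $(q_1^{-1},q_2^{-1})$ and push them through Remark \ref{rem:binv} and \eqref{CSlinv} (equivalently, use the $l=-1$ relations at $d=-1,0,1$, where the shift is $q_i^{d+1/2}$), the eliminated chain for $\td{\mathcal{Z}}^{[1]}$ becomes
\[
\textstyle\sum_n(\cdots|q_i^{-1/2}z)=\sum_n(\cdots|q_i^{1/2}z)=\sum_n(\cdots|q_i^{3/2}z).
\]
These two systems overlap only in the single relation ``$-1/2=1/2$'', which at order $z^k$ gives one linear equation in the two unknowns $c_k^{(1)},c_k^{(2)}$ --- not enough for uniqueness. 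So your uniqueness lemma applies to each system separately, but they are not the same system, and you cannot conclude $\td{\mathcal{Z}}^{[1]}_{inst}=\mathcal{Z}^{[1]}_{inst}$.

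The paper fixes exactly this mismatch by using, for $l=1$, the blowup relations with $d=0,1,2$ (shifts $q_i^{-1/2},q_i^{1/2},q_i^{3/2}$), which now coincide with the $l=-1$, $d=-1,0,1$ shifts. The needed $d=2$ relation for $l=1$ is not among the standard ones listed in \eqref{eq:Z=ZZ}; it is derived in Subsection~\ref{ssec:hoNY} from the $-2$ bilinear relation \eqref{bilconfrel1}, where it is shown that $\beta^{2,[1]}_0=1$. With that extra input, both $\mathcal{Z}^{[1]}$ and $\mathcal{Z}^{[-1]}$ satisfy identical pairs of eliminated relations, the fundamental matrix \eqref{fm1} has nonzero determinant, and the conclusion follows. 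In short: your plan is right in spirit, but it cannot close without the $d=2$, $l=1$ blowup relation, and the ``obvious analogue'' of Lemma~\ref{determlemma} you invoke is not the obstacle --- the mismatch of shift-sets is.
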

	
	\begin{proof}
		Using \eqref{CSlinv}, we find out that we should prove
		\begin{equation}
		\mathcal{Z}^{[-1]}_{inst}(u;q_1,q_2|z)=\mathcal{Z}^{[1]}_{inst}(u;q_1,q_2|z)
		\end{equation}
		For $l=1$ take Nakajima-Yoshioka blowup relations $d=0,1,2$ and for $l=-1$ take relations $d=-1,0,1$.
		Excluding $\mathcal{Z}^{[\pm1]}(u;q_1,q_2|z)$ from these relations we obtain two bilinear relations on 
		$\mathcal{Z}^{[l]}(u;q_1,q_2q_1^{-1}|z)$ and $\mathcal{Z}^{[\l]}(u;q_1q_2^{-1},q_2|z)$ both for $l=-1$ and $l=1$. 
		And these pairs of relations are identical. Here analogue of fundamental matrix \eqref{fm2} is matrix
		\begin{equation}
		\begin{pmatrix}
		q_1^{k/2}-q_1^{-k/2} & q_2^{k/2}-q_2^{-k/2}\\
		q_1^{3k/2}-q_1^{-k/2} & q_2^{3k/2}-q_2^{-k/2} \label{fm1}
		\end{pmatrix}, 
		\end{equation}
		which determinant is equal to $(q_1q_2)^{-k/2}(q_1^{-k}-1)(q_2^{-k}-1)(q_2^{-k}-q_1^{-k})$ which is non-zero iff none
		of $q_1,q_2, q_1/q_2$ is a root of unity. This completes the proof.
	\end{proof}
	\section{Further questions}
	\begin{itemize}
		\item 
		Relation \eqref{bilconfrel} was already written in terms of quantum tau functions, i.e., tau
		functions of form \eqref{Mastertau}, where $[\sg,\log s]=\hbar$ (see (4.15) in \cite{BGM17})
		It will be interesting to rewrite our approach in terms of certain quantum tau functions.
		\item Using this approach, we possibly could find $-3$, $-4$ \ldots bilinear relations on Nekrasov partition functions,
		which result from blowup relations on orbifolds $\mathbb{C}^2/\mathbb{Z}_p, \, p>2$.
		\item In terms of quantum tau functions this approach probably will be easy to generalize for
		$SU(N)$, $N>2$ Nekrasov partition functions. Other possibly useful generalizations
		include adding matter supermultiplets, circular quiver gauge theories etc.
	\end{itemize}

	\noindent\textsc{National Research University Higher School of Economics, Moscow, Russia\\
		Center for Advanced Studies, Skolkovo Institute of Science and Technology, Moscow, Russia}
	
	\emph{E-mail}:\,\,\textbf{shch145@gmail.com}

\end{document}